\documentclass[12pt, a4paper]{article}
\usepackage{arxiv}

\usepackage[utf8]{inputenc} % allow utf-8 input
\usepackage[T1]{fontenc}    % use 8-bit T1 fonts
\usepackage[hidelinks]{hyperref}       % hyperlinks
\usepackage{url}            % simple URL typesetting
\usepackage{booktabs}       % professional-quality tables
\usepackage{amsfonts}       % blackboard math symbols
\usepackage{nicefrac}       % compact symbols for 1/2, etc.
\usepackage{microtype}      % microtypography
\usepackage{lipsum}
\usepackage{graphicx}
\usepackage{orcidlink}
\usepackage{amsmath,amssymb,amsthm}
\usepackage{multirow}
\usepackage{natbib}
\usepackage{float}
\usepackage{subcaption}
\usepackage{makecell}
\usepackage{adjustbox}

\usepackage{algorithm}
\usepackage{algpseudocode}

\usepackage{amsthm}

\newtheorem{proposition}{Proposition}

\theoremstyle{remark}

\newcommand{\tmax}{t_{\max}}

\newcommand{\phat}{\widehat{p}}
\newcommand{\pKM}{\widehat{p}_{KM}}

\newcommand{\Tparam}{T_n^{\mathrm{Par}}}

\title{A parametric framework for assessing and estimating sufficient follow-up time in cure models}

\author{
  Luiz Silva-Resende \orcidlink{0000-0003-4697-6503} \\
  Department of Statistics \\
  Federal University of S{\~a}o Carlos \\
  S{\~a}o Carlos, Brazil \\
  \texttt{luizfsr@ufscar.br} \\
  \phantom{Institute of Mathematics and Computer Sciences}
\And
  Dionisio Alves-Neto \orcidlink{0000-0002-2086-4936} \\
  Institute of Mathematics and Computer Sciences\\
  University of S{\~a}o Paulo\\
  S{\~a}o Carlos, Brazil \\
  \texttt{dionisioneto@usp.br} \\
\And
  Danilo Alvares \orcidlink{0000-0003-3764-0397} \\
  MRC Biostatistics Unit\\
  University of Cambridge\\
  Cambridge, UK \\
  \texttt{danilo.alvares@mrc-bsu.cam.ac.uk} \\
  \phantom{Institute of Mathematics and Computer Sciences}
\And
  Vera Tomazella \orcidlink{0000-0002-6780-2089} \\
  Department of Statistics \\
  Federal University of S{\~a}o Carlos \\
  S{\~a}o Carlos, Brazil \\
  \texttt{vera@ufscar.br} \\
  \phantom{Institute of Mathematics and Computer Sciences}
}

\begin{document}

\maketitle

\begin{abstract}
Reliable estimation of cure fractions depends critically on adequate follow-up. Classical procedures for assessing follow-up sufficiency are primarily inferential, whereas a separate body of literature estimates time to cure using excess-hazard, conditional cure-probability, or residual-survival definitions. These two approaches remain largely disconnected: assessing the adequacy of observed follow-up does not generally quantify the additional follow-up required under an explicit tolerance, whereas estimating time to cure does not itself provide a formal inferential assessment of the available follow-up. We develop a unified parametric framework that combines the Parametric Follow-up Sufficiency Test, which compares the terminal Kaplan-Meier estimate with the cure fraction estimated from a parametric mixture cure model, with two complementary tolerance-based formulations: the Plateau Distance Criterion on the population survival scale and the Residual Survival Criterion on the susceptible survival scale. We derive closed-form expressions for both criteria under the Weibull mixture cure model and establish that they yield identical minimum follow-up times under an appropriate transformation of their tolerance parameters. We evaluate the finite-sample performance of the framework through Monte Carlo simulations across sample sizes, cure fractions, and administrative censoring scenarios. Applications to prostate cancer and triple-negative breast cancer data illustrate how the framework assesses available follow-up, estimates additional observation time under prespecified tolerances, and identifies settings in which follow-up is already adequate.
\end{abstract}
\keywords{Cure rate models \and Follow-up sufficiency \and Long-term survivors \and Minimum follow-up time \and Survival analysis \and Weibull mixture model}

\section{Introduction} 
\label{sec:intro}

Cure rate models play a fundamental role in survival analysis when a proportion of individuals is expected to remain permanently free of the event of interest. Applications are particularly common in oncology, epidemiology, and reliability studies, in which advances in treatment or intervention may produce a subgroup of long-term survivors whose mortality eventually approaches that of the general population \citep{yilmaz2013insights, dal2019prognosis, johnen2021modeling}. Since the pioneering work of \cite{Boag1949} and \cite{Berkson1952}, cure models have been widely used to estimate the proportion of individuals considered cured and to investigate factors associated with long-term survival.

A fundamental requirement for reliable estimation of the cure fraction is a sufficiently long follow-up period \citep{tai2005minimum}. When follow-up is inadequate, a substantial proportion of susceptible individuals may remain event-free simply because they have not been observed long enough. In such settings, the estimated cure fraction may be biased, often upward, leading to misleading conclusions about the cure fraction, long-term prognosis, and treatment effectiveness.

Classical methods for assessing follow-up sufficiency have been formulated mainly as hypothesis-testing procedures. The seminal work of \citet{mallerandZhou1994testing} established support-based conditions for sufficient follow-up, whereas subsequent developments introduced alternative nonparametric and bootstrap procedures, including the recent test of \citet{xie2024testing}. These methods provide formal evidence about whether the available follow-up is adequate, but they do not directly calculate the additional observation time required when follow-up is judged insufficient.

A related but distinct literature has focused on estimating time to cure. \citet{chauvenet2009prevalence} defined a practical cure threshold as the time at which fewer than \(10\%\) of the patients expected to die from cancer remained alive. \citet{dal2014long} considered the time required for conditional relative survival over prespecified future windows to exceed clinically selected thresholds. \citet{boussari2018new} defined time to cure as the earliest time at which the conditional probability that an individual who remains alive belongs to the cured group reaches \(0.95\); \citet{romain2019time} subsequently applied this definition in a large population-based study of solid cancers. \citet{jakobsen2020estimating} placed these and related definitions within a broader framework in which a cure point is obtained by comparing a dynamic mortality measure with a prespecified margin of clinical relevance. \citet{boussari2021modeling} proposed an excess-hazard cure model in which the time to null excess hazard is a covariate-dependent parameter. More recently, \citet{dimari2025flexible} considered the time after which only a prespecified negligible proportion of fatal cases remains alive. From a different perspective, \citet{selukar2023receus} proposed the RECeUS diagnostic, which evaluates cure-model appropriateness through the estimated proportion of uncured subjects remaining at the administrative censoring time.

These contributions demonstrate that follow-up adequacy and time to cure can be studied through several complementary targets. Susceptible-survival thresholds were already considered by \citet{chauvenet2009prevalence}, and \citet{boussari2018new} explicitly expressed the conditional probability of cure in terms of the cure fraction and the survival function of uncured patients. However, these approaches do not jointly connect a formal inferential assessment of observed follow-up with a direct calculation of the additional observation time required under interpretable tolerance levels. The contribution of the present framework therefore lies not in introducing the underlying inverse-threshold principle, but in integrating it with a formal follow-up sufficiency test, expressing the resulting requirements on both the population and susceptible survival scales, establishing transformations between these scales, and quantifying the additional follow-up required beyond the observed study endpoint.

From a practical perspective, this gap is particularly relevant in large population-based cancer registries and longitudinal cohort studies. Researchers and health authorities frequently need to determine not only whether the current follow-up is adequate, but also how many additional years of observation are required before cure estimates can be interpreted under a prespecified practical tolerance. Such information is relevant for study design, resource allocation, and the interpretation of long-term survival outcomes.

Existing methods answer different questions about follow-up adequacy, but investigators still lack a unified statistical framework capable of simultaneously assessing whether follow-up is sufficient and estimating how much additional follow-up would be required under explicit tolerance levels. This methodological gap motivates the framework proposed in this paper.

To connect these inferential and estimation perspectives, we develop a parametric framework with three components. First, we introduce the Parametric Follow-up Sufficiency Test (PFST) for assessing whether the available follow-up is adequate. Second, we express the minimum follow-up requirement on two complementary scales through the Plateau Distance Criterion (PDC) and the Residual Survival Criterion (RSC). Third, we establish the exact mathematical relationship between the PDC and the RSC under an appropriate transformation of their tolerance parameters.

Although the proposed framework applies to any parametric cure model with an invertible susceptible survival function, we focus on the Weibull mixture cure model because it combines modeling flexibility with closed-form expressions for the criteria. Unlike existing nonparametric approaches, the framework uses the extrapolation capability of the fitted parametric model to project survival beyond the maximum observed follow-up time, allowing straightforward calculation of the minimum sufficient follow-up time for user-specified tolerance levels.

The main methodological contributions of this work are as follows:

\begin{enumerate}
\item It introduces the \textit{Parametric Follow-up Sufficiency Test} (PFST), which formally compares the terminal Kaplan--Meier estimate with the cure fraction estimated from a parametric mixture cure model. Inference is developed through both a centered nonparametric bootstrap procedure and an influence-function-based asymptotic approximation.

\item It provides two complementary criteria for estimating the minimum sufficient follow-up time: the \textit{Plateau Distance Criterion} (PDC), defined on the population survival scale, and the \textit{Residual Survival Criterion} (RSC), defined on the susceptible survival scale.

\item It derives closed-form expressions for both criteria under the Weibull mixture cure model, allowing straightforward estimation of the minimum follow-up time required to satisfy prespecified tolerance levels.

\item It establishes the mathematical relationship between the PDC and the RSC, demonstrating that the two criteria are mathematically equivalent under an appropriate transformation of their tolerance parameters while providing complementary scientific interpretations.
\end{enumerate}
% The main methodological contributions are fourfold. First, the PFST provides a formal comparison between the terminal Kaplan-Meier estimate and the cure fraction estimated from a parametric mixture cure model, with inference based on both a centered bootstrap and an influence-function approximation. Second, minimum follow-up requirements are expressed on the population and susceptible survival scales. Third, closed-form expressions for the proposed estimation criteria are derived under the Weibull mixture cure model. Finally, the mathematical relationship between the PDC and the RSC is established and their practical interpretation is demonstrated through simulations and cancer applications.

The proposed methodology is illustrated using a large population-based cohort of patients with prostate cancer. The application demonstrates how the framework can estimate the minimum follow-up time required to attain prespecified levels of proximity to the cure plateau while simultaneously assessing whether the available follow-up is adequate. The second application, based on triple-negative breast cancer data, illustrates a contrasting setting in which the available follow-up is already sufficient.

The remainder of the paper is organized as follows. Section~2 reviews the concept of follow-up sufficiency in cure models and discusses its theoretical foundation. Section~3 introduces the PFST, develops its centered bootstrap and influence-function-based inferential implementations, and presents the PDC and RSC under the Weibull mixture cure model. Section~4 presents a simulation study evaluating the finite-sample performance of the proposed procedures. Section~5 illustrates the methodology using two datasets on prostate cancer and triple-negative breast cancer. Finally, Section~6 concludes with a discussion of the main findings, practical implications, and directions for future research.

\section{Assessing follow-up sufficiency in cure rate models}

Reliable estimation of the cure fraction is a primary objective of cure rate models. Such estimation, however, is meaningful only when the available follow-up is sufficiently long to distinguish individuals who are truly cured from those who remain susceptible but have not yet experienced the event.

When follow-up is prematurely terminated, some susceptible individuals may still be event-free at the end of the study and therefore become indistinguishable from cured subjects. As a result, the cure fraction may be overestimated, leading to potentially misleading conclusions regarding long-term prognosis and treatment effectiveness. Consequently, assessing follow-up sufficiency is a prerequisite for the reliable interpretation of cure fraction estimates.

This issue was formally addressed by \citet{Maller1992,Maller1996}, who showed that the cure fraction is identifiable only when the support of the censoring distribution covers the support of the susceptible failure-time distribution. These results provide the theoretical basis for support-based assessments of follow-up sufficiency. However, the associated testing procedures do not determine the additional follow-up required when the available observation period is insufficient.

% To introduce the proposed framework, we first review the standard mixture cure model and the formal definition of follow-up sufficiency.

The standard mixture cure model introduced by \cite{Boag1949} and \cite{Berkson1952} is defined through an improper survival function \(S_p\) and assumes that the population comprises cured and susceptible individuals. Specifically, \(S_p\) is given by

\begin{equation}
S_p(t)=p+(1-p)S_0(t),
\label{mixtura}
\end{equation}

where \(p\) denotes the cure fraction and \(S_0(t)\) is the parametric survival function of susceptible individuals. As \(t\rightarrow\infty\), \(S_0(t)\rightarrow0\) and, consequently, \(S_p(t)\rightarrow p\), the cure proportion in the population.

% Since $S_0(t)\rightarrow0$ as $t\rightarrow\infty$, the population survival function converges to the cure fraction,
% that is, $S_p(t)\rightarrow p$. 

In Equation~\eqref{mixtura}, reliable estimation of the cure fraction requires follow-up sufficiently long for the susceptible survival function, \(S_0(t)\), to be observed close to zero, allowing the population survival curve to approach its limiting value \(p\).

Let
$$
\tau_{F_0}=\sup\{t:S_0(t)>0\}, \qquad \tau_C=\sup\{t:G(t)<1\},
$$

denote the upper endpoints of the failure-time distribution for susceptible individuals and the censoring distribution, respectively, where $G(t)$ is the cumulative distribution function of the censoring time.

% In this context, $\tau_{F_0}$ denotes the latest potential failure time among susceptible individuals, whereas $\tau_C$ denotes the largest observable follow-up time. Therefore, follow-up is sufficient only when the support of the censoring distribution covers that of the susceptible failure-time distribution.

In this context, \(\tau_{F_0}\) represents the latest time at which a susceptible individual may experience the event of interest, whereas \(\tau_C\) represents the longest follow-up attainable under the censoring mechanism. Thus, \(\tau_C\) determines the maximum observation window available for detecting failures.

Follow-up is considered sufficient if the support of the censoring distribution fully covers that of the susceptible failure-time distribution, that is,

% Formally, sufficient follow-up is defined by

\begin{equation}
\tau_C \geq \tau_{F_0}.
\label{eq:sufficient}
\end{equation}

Under this condition, every susceptible failure has the potential to be observed before the study terminates.

Conversely, if 

\begin{equation}
\tau_C < \tau_{F_0},
\label{eq:sufficient_contray}
\end{equation}

there is a nonempty interval \((\tau_C,\tau_{F_0}]\) over which susceptible failures cannot be observed. Consequently, individuals surviving beyond \(\tau_C\) cannot be distinguished from cured individuals, compromising identification of the cure fraction in an observed sample.

% Otherwise, some susceptible individuals remain unobserved after the end of follow-up, making it impossible to distinguish late failures from cured individuals.

The problem of follow-up sufficiency can therefore be formulated as the following hypothesis test:

\begin{equation*}
H_0:\tau_C \geq \tau_{F_0} \qquad \text{versus} \qquad H_1:\tau_C < \tau_{F_0}.
\end{equation*}

The null hypothesis states that the available follow-up is sufficiently long for reliable estimation of the cure fraction, whereas the alternative indicates that the study ends before all susceptible failures can potentially be observed. Several nonparametric procedures have been developed to test hypotheses based on the relative upper endpoints of the susceptible failure-time and censoring distributions \citep{Maller1992,Maller1996}. More recently, \citet{xie2024testing} proposed a bootstrap procedure with improved finite-sample performance.

A different diagnostic perspective was introduced by \citet{selukar2023receus}. Rather than relying solely on sample extremes, RECeUS uses a parametric estimate of the susceptible survival probability remaining at the administrative censoring time, standardized by the corresponding population survival probability. RECeUS assesses whether a cure model is appropriate at the observed end of follow-up, but it is not a formal hypothesis test and does not invert the fitted survival function to estimate the required minimum follow-up time.

It is therefore useful to distinguish three related problems: formal testing of follow-up sufficiency, diagnosis of cure-model appropriateness at the observed study endpoint, and estimation of a practical time-to-cure. The framework developed in the next section connects these perspectives without treating them as identical inferential targets.

\section{A parametric framework for estimating the minimum sufficient follow-up time} \label{sec:criteria}

In this section, we present a framework comprising three complementary components: the Parametric Follow-up Sufficiency Test (PFST), the Plateau Distance Criterion (PDC), and the Residual Survival Criterion (RSC). The PFST compares the terminal Kaplan-Meier estimate with the cure fraction estimated from a parametric mixture cure model. Inference for the PFST uses both a centered nonparametric bootstrap procedure and an asymptotic Gaussian approximation based on influence functions. The PDC and RSC are then used to estimate the minimum sufficient follow-up time from complementary population- and susceptible-level perspectives.

\subsection{The parametric follow-up sufficiency test}

In survival studies, the Kaplan-Meier estimator provides a nonparametric description of survival over time. An apparent plateau in the estimated survival curve is often interpreted as evidence of long-term survivors, suggesting that a proportion of individuals may be cured or immune to the event of interest. However, such a plateau does not itself guarantee that the available follow-up is sufficient, because susceptible individuals may experience additional failures after the study ends.

%{\color{red} Aqui seria bom colocar exemplos de curvas de sobrevida com individuos plateau para justificar o problema, uma com uma cauda longa e outra com uma cauda mais curta. Aí, no paragrafo acima, iriamos justificar.}

As discussed by \citet{maller2024mixture}, the behavior of the right tail and the relative extremes of the failure-time and censoring distributions play a central role in assessing follow-up sufficiency. Similarly, \citet{yuen2026testing} emphasize that visual inspection of a prolonged Kaplan-Meier plateau may be ambiguous and inadequate, potentially leading to overestimation of the cure fraction. This limitation can be illustrated by comparing susceptible survival distributions with shorter and longer right tails. With a shorter tail, most failures occur before the study ends, and the observed plateau is more likely to represent the cure fraction. With a longer tail, however, late failures may remain unobserved, allowing an apparent plateau to emerge before the population survival curve reaches its true limiting value.

Let \(\widehat{p}\) denote the maximum likelihood estimate of the cure fraction obtained from the fitted parametric mixture cure model in Equation~\eqref{mixtura}. Let \(\widehat{p}_{KM}\) denote the terminal Kaplan-Meier estimate, defined by
\begin{equation}
\widehat{p}_{KM}=\widehat{S}_{KM}(t_{\max}),
\label{eq:pKM_terminal}
\end{equation}
where
\[
t_{\max}=\max\{Y_1,\ldots,Y_n\}
\]
is the largest observed follow-up time. Under sufficient follow-up, \(\widehat{p}_{KM}\) provides a nonparametric estimator of the cure fraction.

When follow-up is sufficient, the Kaplan-Meier estimator is expected to have reached its long-term plateau. Consequently, both \(\widehat{p}_{KM}\) and \(\widehat{p}\) target the same population quantity, namely the cure fraction \(p\), and their difference should be close to zero. Conversely, when follow-up is insufficient, some susceptible individuals may remain event-free at the end of the study and thus be indistinguishable from cured individuals. In this case, the terminal Kaplan-Meier estimate may remain above the asymptotic cure plateau predicted by the fitted parametric model.

Motivated by this discrepancy, we define the PFST statistic as
\begin{equation}
T_n^{Par}=\widehat{p}_{KM}-\widehat{p}.
\label{eq:PFST_statistic}
\end{equation}

The PFST statistic measures the discrepancy between the terminal nonparametric survival estimate and the cure fraction predicted by the fitted parametric cure model. Values close to zero indicate agreement between the two estimators and are consistent with sufficient follow-up. Positive values indicate that the terminal Kaplan-Meier estimate remains above the estimated cure plateau and are therefore consistent with insufficient follow-up. Larger positive values provide stronger evidence that additional follow-up may be required.

Under a correctly specified cure model, negative values are not expected asymptotically under insufficient follow-up. Small negative values may nevertheless occur in finite samples because of sampling variability or model misspecification.

To investigate the large-sample properties of the PFST statistic, we consider its asymptotic distribution. Because the terminal Kaplan-Meier estimator and the maximum likelihood estimator of the cure fraction are calculated from the same censored sample, their dependence must be incorporated into the asymptotic variance. The following proposition establishes the asymptotic distribution of the PFST statistic under the stated assumptions and provides the theoretical basis for the bootstrap and influence-function implementations developed below.

\begin{proposition}[Asymptotic distribution of the PFST statistic]
\label{propo1}
Consider the PFST statistic \(T_n^{Par}=\widehat{p}_{KM}-\widehat{p}\). Suppose that the following regularity conditions hold:
\begin{enumerate}
    \item \textbf{Model specification:} the parametric mixture cure model is correctly specified;
    \item \textbf{Sufficient follow-up:} the terminal Kaplan-Meier estimator is consistent and asymptotically normal;
    \item \textbf{Joint asymptotic normality:} the estimators \(\widehat{p}_{KM}\) and \(\widehat{p}\) are consistent and jointly asymptotically normal, such that
    \begin{equation}
    \sqrt{n}
    \begin{pmatrix}
    \widehat{p}_{KM}-p\\
    \widehat{p}-p
    \end{pmatrix}
    \xrightarrow{d}
    \mathcal{N}
    \left[
    \begin{pmatrix}
    0\\
    0
    \end{pmatrix},
    \begin{pmatrix}
    \sigma_{KM}^{2} & \sigma_{12}\\
    \sigma_{12} & \sigma_{p}^{2}
    \end{pmatrix}
    \right],
    \label{eq:joint_asymptotic_distribution}
    \end{equation}
    where \(p\) is the true cure fraction and
    \[
    \sigma_{12}=\lim_{n\rightarrow\infty}n\,\operatorname{Cov}\left(\widehat{p}_{KM},\widehat{p}\right).
    \]
\end{enumerate}
Then,
\begin{equation}
\sqrt{n}\,T_n^{Par}\xrightarrow{d}\mathcal{N}(0,\tau^{2}),
\label{eq:PFST_asymptotic_distribution}
\end{equation}
where
\begin{equation}
\tau^{2}=\sigma_{KM}^{2}+\sigma_{p}^{2}-2\sigma_{12}.
\label{eq:tau_variance}
\end{equation}
Consequently, under sufficient follow-up,
\[
T_n^{Par}\xrightarrow{P}0
\qquad\text{as}\qquad
n\rightarrow\infty.
\]
\end{proposition}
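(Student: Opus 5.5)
The argument is a direct application of the continuous mapping theorem (equivalently, the delta method with a linear map) to the joint limit in condition~3 of Proposition~\ref{propo1}. I would write
\[
\sqrt{n}\,T_n^{Par}=\sqrt{n}\bigl(\widehat{p}_{KM}-p\bigr)-\sqrt{n}\bigl(\widehat{p}-p\bigr)=a^{\top}Z_n,
\qquad a=(1,-1)^{\top},
\]
where \(Z_n\) is the bivariate vector appearing on the left of \eqref{eq:joint_asymptotic_distribution}. The centering by the common value \(p\) cancels exactly. This step uses the fact that both estimators target the same cure fraction. That is where conditions~1 and~2 enter: correct specification makes \(\widehat{p}\) consistent for \(p\), and sufficient follow-up makes \(\widehat{S}_{KM}(t_{\max})\) consistent for \(S_p(\tau_C)=p\), because \(S_0(\tau_C)=0\) when \(\tau_C\ge\tau_{F_0}\).

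Next, the map \(z\mapsto a^{\top}z\) is continuous (indeed linear). By the continuous mapping theorem, or the Cramér--Wold device, \(a^{\top}Z_n\xrightarrow{d}a^{\top}Z\), where \(Z\sim\mathcal{N}(0,\Sigma)\) and \(\Sigma\) is the covariance matrix in \eqref{eq:joint_asymptotic_distribution}. A linear combination of a Gaussian vector is Gaussian, with mean \(0\) and variance
\[
a^{\top}\Sigma a=\sigma_{KM}^{2}+\sigma_{p}^{2}-2\sigma_{12}=\tau^{2},
\]
which gives \eqref{eq:PFST_asymptotic_distribution} and \eqref{eq:tau_variance}. If \(\tau^2=0\), the limit is degenerate at \(0\) and the statement still holds with the convention \(\mathcal{N}(0,0)=\delta_0\).

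For the final claim, \(\sqrt{n}\,T_n^{Par}=O_P(1)\) because it converges in distribution, and so it is tight. Hence \(T_n^{Par}=n^{-1/2}\cdot O_P(1)=o_P(1)\). Alternatively, I would take the difference of the two consistent estimators from condition~3 directly.

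The only genuine subtlety is justifying that the joint limit has a common centering \(p\) for both components. Equivalently, the terminal Kaplan--Meier value must be \(\sqrt{n}\)-consistent for \(p\) rather than for \(S_p(\tau_C)\), and any bias must be \(o(n^{-1/2})\). Since this is assumed in conditions~2 and~3, the proof simply cites them. I would add a remark that, without sufficient follow-up, the centering would be \(S_p(\tau_C)-p=(1-p)S_0(\tau_C)>0\) and \(\sqrt{n}\,T_n^{Par}\) would diverge to \(+\infty\). This remark motivates the one-sided rejection region.
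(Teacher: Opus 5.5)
Your proposal is correct and matches the paper's proof: apply the linear map $(1,-1)$ to the joint limit in condition~3 via the continuous mapping theorem, which gives the variance $\sigma_{KM}^{2}+\sigma_{p}^{2}-2\sigma_{12}$, and then use $\sqrt{n}\,T_n^{Par}=O_p(1)$ to conclude $T_n^{Par}\xrightarrow{P}0$. Your extra remarks on the degenerate case $\tau^{2}=0$ and on the divergence of $\sqrt{n}\,T_n^{Par}$ without sufficient follow-up are consistent with the paper's argument but not needed for it.
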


\begin{proof}
Since \(T_n^{Par}=\widehat{p}_{KM}-\widehat{p}\), the PFST statistic is a linear combination of two jointly asymptotically normal estimators. Applying the continuous mapping theorem to the joint distribution in Equation~\eqref{eq:joint_asymptotic_distribution} gives
\[
\sqrt{n}\,T_n^{Par}
=
\begin{pmatrix}
1 & -1
\end{pmatrix}
\sqrt{n}
\begin{pmatrix}
\widehat{p}_{KM}-p\\
\widehat{p}-p
\end{pmatrix}
\xrightarrow{d}
\mathcal{N}\left(0,\sigma_{KM}^{2}+\sigma_{p}^{2}-2\sigma_{12}\right).
\]
Therefore,
\[
\sqrt{n}\,T_n^{Par}\xrightarrow{d}\mathcal{N}(0,\tau^{2}),
\]
where \(\tau^{2}=\sigma_{KM}^{2}+\sigma_{p}^{2}-2\sigma_{12}\). Moreover, because \(\sqrt{n}\,T_n^{Par}=O_p(1)\), it follows that
\[
T_n^{Par}=O_p\left(n^{-1/2}\right)\xrightarrow{P}0.
\]
\end{proof}

Although Proposition~\ref{propo1} establishes the asymptotic distribution of the PFST statistic, its variance depends on the unknown components \(\sigma_{KM}^{2}\), \(\sigma_{p}^{2}\), and \(\sigma_{12}\). Therefore, these quantities must be estimated before the asymptotic Gaussian approximation can be used in practice.

Two complementary inferential implementations are considered in this paper. The first is a centered nonparametric bootstrap procedure, which is adopted as the primary finite-sample inferential method. The second is an influence-function estimator of the asymptotic variance, developed in the next subsection, which provides an analytic implementation of Proposition~\ref{propo1} and allows the accuracy of the asymptotic Gaussian approximation to be evaluated.

For the centered nonparametric bootstrap, samples are generated by resampling the observed pairs \((Y_i,\Delta_i)\) with replacement. For each bootstrap sample, the parametric mixture cure model is refitted, the cure fraction is re-estimated, and the PFST statistic is recomputed. The empirical distribution of the centered bootstrap statistics is then used to approximate the sampling distribution of the PFST statistic and obtain critical values and \(p\)-values. The PDC and RSC subsequently address estimation of the minimum sufficient follow-up time.

\subsection{Influence-function representation and asymptotic variance estimation}
\label{subsec:influence_function_pfst}

Proposition~\ref{propo1} establishes the asymptotic distribution of the PFST statistic under joint asymptotic normality of the terminal Kaplan-Meier estimator and the maximum likelihood estimator of the cure fraction. In this subsection, we provide asymptotic linear representations for these estimators and obtain an empirical estimator of the asymptotic variance \(\tau^2\) defined in Proposition~\ref{propo1}.

Let
\[
O_i=(Y_i,\Delta_i),\qquad i=1,\ldots,n,
\]
denote independent and identically distributed right-censored observations, where \(Y_i\) is the observed follow-up time and \(\Delta_i\) is the event indicator.

Under suitable regularity conditions, suppose that the terminal Kaplan-Meier estimator and the parametric estimator of the cure fraction admit the asymptotically linear representations
\begin{equation}
\sqrt{n}\left(\widehat{p}_{KM}-p\right)=\frac{1}{\sqrt{n}}\sum_{i=1}^{n}IF_{KM}(O_i)+o_p(1),
\label{eq:IF_KM_representation}
\end{equation}
and
\begin{equation}
\sqrt{n}\left(\widehat{p}-p\right)=\frac{1}{\sqrt{n}}\sum_{i=1}^{n}IF_p(O_i)+o_p(1),
\label{eq:IF_p_representation}
\end{equation}
where \(IF_{KM}(O_i)\) and \(IF_p(O_i)\) denote the individual influence contributions of the terminal Kaplan-Meier estimator and the parametric cure-fraction estimator, respectively. If these contributions have mean zero and finite second moments, the multivariate central limit theorem applied to their joint representation yields the joint asymptotic normality assumed in Proposition~\ref{propo1}.

The influence-function representation for estimators based on right-censored data, including the Kaplan-Meier product-limit estimator, was developed by \citet{reid1981influence}. Let \(u_1<\cdots<u_J\) denote the distinct observed event times. At each event time \(u_j\), let \(r_j\) denote the number of individuals at risk immediately before \(u_j\), and let \(d_j\) denote the number of events at \(u_j\). The terminal Kaplan-Meier estimate introduced in the previous subsection can be written as
\begin{equation}
\widehat{p}_{KM}=\prod_{j=1}^{J}\left(1-\frac{d_j}{r_j}\right).
\label{eq:pKM_product_limit}
\end{equation}

For each individual \(i\), define
\begin{equation}
E_i=\sum_{j=1}^{J}\frac{\mathbf{1}\{Y_i=u_j,\Delta_i=1\}}{r_j-d_j},
\label{eq:E_i_KM}
\end{equation}
and
\begin{equation}
G_i=\sum_{j=1}^{J}\mathbf{1}\{u_j\leq Y_i\}\frac{d_j}{r_j(r_j-d_j)}.
\label{eq:G_i_KM}
\end{equation}

The quantity \(E_i\) is nonzero only when individual \(i\) experiences the event at one of the observed event times, whereas \(G_i\) accumulates the Greenwood increments up to the observed follow-up time \(Y_i\). Provided that \(r_j>d_j\) for all observed event times, an empirical influence contribution for \(\widehat{p}_{KM}\) is given by
\begin{equation}
\widehat{IF}_{KM,i}=-n\widehat{p}_{KM}\left(E_i-G_i\right).
\label{eq:IF_KM_empirical}
\end{equation}

Because \(t_{\max}\) depends on the observed sample, the use of Equation~\eqref{eq:IF_KM_empirical} relies on the sufficient-follow-up and regularity conditions stated in Proposition~\ref{propo1}, under which the terminal Kaplan-Meier estimator is consistent, asymptotically normal, and admits the representation in Equation~\eqref{eq:IF_KM_representation}.

Let
\[
\overline{IF}_{KM}=\frac{1}{n}\sum_{i=1}^{n}\widehat{IF}_{KM,i},
\]
and define the centered influence contributions as
\begin{equation}
\widetilde{IF}_{KM,i}=\widehat{IF}_{KM,i}-\overline{IF}_{KM}.
\label{eq:IF_KM_centered}
\end{equation}

The asymptotic variance component associated with \(\sqrt{n}(\widehat{p}_{KM}-p)\) can then be estimated by
\begin{equation}
\widehat{\sigma}_{KM}^{\,2}=\frac{1}{n}\sum_{i=1}^{n}\widetilde{IF}_{KM,i}^{\,2}.
\label{eq:sigma_KM_IF}
\end{equation}

This expression is equivalent to \(n\) times Greenwood's variance estimator for the Kaplan-Meier estimator \citep{greenwood1926natural,breslow1974large,reid1981influence},
\begin{equation}
\widehat{\sigma}_{KM}^{\,2}=n\widehat{p}_{KM}^{\,2}\sum_{j=1}^{J}\frac{d_j}{r_j(r_j-d_j)}.
\label{eq:sigma_KM_greenwood}
\end{equation}

For the parametric component, we specialize the general mixture cure model in Equation~\eqref{mixtura} by assuming a Weibull distribution for the failure times of susceptible individuals. Let \(a>0\) and \(b>0\) denote the Weibull shape and scale parameters, respectively, and let \(p\in(0,1)\) denote the cure fraction.

For numerical estimation, consider the unconstrained parameterization
\begin{equation}
\boldsymbol{\theta}
=
\begin{pmatrix}
\eta_a\\
\eta_b\\
\eta_p
\end{pmatrix}
=
\begin{pmatrix}
\log(a)\\
\log(b)\\
\operatorname{logit}(p)
\end{pmatrix},
\label{eq:theta_unconstrained}
\end{equation}
where
\[
a=\exp(\eta_a),\qquad b=\exp(\eta_b),\qquad p=g(\boldsymbol{\theta})=\frac{\exp(\eta_p)}{1+\exp(\eta_p)}.
\]

The survival and density functions of the susceptible failure-time distribution are, respectively,
\begin{equation}
S_0(t;a,b)=\exp\left\{-\left(\frac{t}{b}\right)^a\right\},
\label{eq:weibull_survival}
\end{equation}
and
\begin{equation}
f_0(t;a,b)=\frac{a}{b}\left(\frac{t}{b}\right)^{a-1}\exp\left\{-\left(\frac{t}{b}\right)^a\right\}.
\label{eq:weibull_density}
\end{equation}

Under independent and noninformative right censoring, and up to additive terms involving only the censoring distribution, the contribution of observation \(i\) to the log-likelihood for \(\boldsymbol{\theta}\) is
\begin{equation}
\ell_i(\boldsymbol{\theta})=\Delta_i\left[\log(1-p)+\log f_0(Y_i;a,b)\right]+(1-\Delta_i)\log\left[p+(1-p)S_0(Y_i;a,b)\right].
\label{eq:individual_loglikelihood}
\end{equation}

Let
\begin{equation}
\boldsymbol{s}_i(\boldsymbol{\theta})=\frac{\partial\ell_i(\boldsymbol{\theta})}{\partial\boldsymbol{\theta}}
\label{eq:individual_score}
\end{equation}
denote the individual score vector, and define the sensitivity matrix by
\begin{equation}
\boldsymbol{A}(\boldsymbol{\theta}_0)=-\mathbb{E}\left[\frac{\partial\boldsymbol{s}_i(\boldsymbol{\theta})}{\partial\boldsymbol{\theta}^{\mathsf T}}\bigg|_{\boldsymbol{\theta}=\boldsymbol{\theta}_0}\right],
\label{eq:sensitivity_matrix}
\end{equation}
where \(\boldsymbol{\theta}_0\) denotes the true parameter vector.

Under the general \(M\)-estimation framework described by \citet{stefanski2002calculus}, the maximum likelihood estimator admits the influence-function representation
\begin{equation}
IF_{\boldsymbol{\theta}}(O_i)=\boldsymbol{A}(\boldsymbol{\theta}_0)^{-1}\boldsymbol{s}_i(\boldsymbol{\theta}_0).
\label{eq:IF_theta}
\end{equation}

Since the cure fraction is a smooth function of \(\boldsymbol{\theta}\), the delta method gives
\begin{equation}
IF_p(O_i)=\dot g(\boldsymbol{\theta}_0)^{\mathsf T}\boldsymbol{A}(\boldsymbol{\theta}_0)^{-1}\boldsymbol{s}_i(\boldsymbol{\theta}_0),
\label{eq:IF_p}
\end{equation}
where
\begin{equation}
\dot g(\boldsymbol{\theta}_0)=\frac{\partial g(\boldsymbol{\theta})}{\partial\boldsymbol{\theta}}\bigg|_{\boldsymbol{\theta}=\boldsymbol{\theta}_0}
=
\begin{pmatrix}
0\\
0\\
p(1-p)
\end{pmatrix}.
\label{eq:gradient_p}
\end{equation}

For empirical estimation, let
\begin{equation}
\widehat{\boldsymbol{A}}=-\frac{1}{n}\sum_{i=1}^{n}\frac{\partial\boldsymbol{s}_i(\boldsymbol{\theta})}{\partial\boldsymbol{\theta}^{\mathsf T}}\bigg|_{\boldsymbol{\theta}=\widehat{\boldsymbol{\theta}}}.
\label{eq:Ahat_empirical}
\end{equation}

In maximum likelihood estimation, \(\widehat{\boldsymbol{A}}\) corresponds to the negative Hessian matrix of the total log-likelihood evaluated at \(\widehat{\boldsymbol{\theta}}\), divided by \(n\).

The estimated individual influence contribution of the cure-fraction estimator is
\begin{equation}
\widehat{IF}_{p,i}=\dot g(\widehat{\boldsymbol{\theta}})^{\mathsf T}\widehat{\boldsymbol{A}}^{-1}\boldsymbol{s}_i(\widehat{\boldsymbol{\theta}}).
\label{eq:IF_p_empirical}
\end{equation}

Let
\[
\overline{IF}_{p}=\frac{1}{n}\sum_{i=1}^{n}\widehat{IF}_{p,i},
\]
and define the centered influence contributions by
\begin{equation}
\widetilde{IF}_{p,i}=\widehat{IF}_{p,i}-\overline{IF}_{p}.
\label{eq:IF_p_centered}
\end{equation}

The asymptotic variance component associated with \(\sqrt{n}(\widehat{p}-p)\) is estimated by
\begin{equation}
\widehat{\sigma}_{p}^{\,2}=\frac{1}{n}\sum_{i=1}^{n}\widetilde{IF}_{p,i}^{\,2}.
\label{eq:sigma_p_IF}
\end{equation}

Because \(a\), \(b\), and \(p\) are estimated jointly, Equation~\eqref{eq:sigma_p_IF} incorporates both the direct uncertainty associated with the cure fraction and the uncertainty propagated through the estimation of the Weibull shape and scale parameters.

Combining Equations~\eqref{eq:IF_KM_representation} and~\eqref{eq:IF_p_representation}, the PFST statistic satisfies
\begin{equation}
\sqrt{n}T_n^{Par}=\frac{1}{\sqrt{n}}\sum_{i=1}^{n}IF_T(O_i)+o_p(1),
\label{eq:IF_T_representation}
\end{equation}
where
\begin{equation}
IF_T(O_i)=IF_{KM}(O_i)-IF_p(O_i).
\label{eq:IF_T}
\end{equation}

Since \(IF_T(O_i)=IF_{KM}(O_i)-IF_p(O_i)\), its variance satisfies
\begin{align}
\operatorname{Var}\left\{IF_T(O_i)\right\}
&=\operatorname{Var}\left\{IF_{KM}(O_i)\right\}+\operatorname{Var}\left\{IF_p(O_i)\right\}\nonumber\\
&\quad-2\operatorname{Cov}\left\{IF_{KM}(O_i),IF_p(O_i)\right\}.
\label{eq:IF_T_variance_expansion}
\end{align}

This variance expansion recovers the asymptotic variance \(\tau^2\) defined in Proposition~\ref{propo1}. The covariance component \(\sigma_{12}\), also defined in Proposition~\ref{propo1}, has the influence-function representation
\[
\sigma_{12}=\operatorname{Cov}\left\{IF_{KM}(O_i),IF_p(O_i)\right\}.
\]

Because \(\widehat{p}_{KM}\) and \(\widehat{p}\) are calculated from the same censored sample, their dependence must be incorporated into the variance of the PFST statistic. The covariance component can be estimated from the empirical cross-product of their centered individual influence contributions:
\begin{equation}
\widehat{\sigma}_{12}=\frac{1}{n}\sum_{i=1}^{n}\widetilde{IF}_{KM,i}\widetilde{IF}_{p,i}.
\label{eq:sigma_12_IF}
\end{equation}

The resulting influence-function estimator of the asymptotic variance of the PFST statistic is
\begin{equation}
\widehat{\tau}_{IF}^{\,2}=\widehat{\sigma}_{KM}^{\,2}+\widehat{\sigma}_{p}^{\,2}-2\widehat{\sigma}_{12}.
\label{eq:tau_IF_decomp}
\end{equation}

Equivalently, this variance can be obtained directly from the influence contributions of the difference:
\begin{equation}
\widehat{\tau}_{IF}^{\,2}=\frac{1}{n}\sum_{i=1}^{n}\left(\widetilde{IF}_{KM,i}-\widetilde{IF}_{p,i}\right)^2.
\label{eq:tau_IF_direct}
\end{equation}

Equations~\eqref{eq:tau_IF_decomp} and~\eqref{eq:tau_IF_direct} are algebraically equivalent. Let
\[
\widehat{\tau}_{IF}=\left(\widehat{\tau}_{IF}^{\,2}\right)^{1/2}
\]
denote the estimated asymptotic standard deviation. The standardized PFST statistic based on the influence-function representation is then defined as
\begin{equation}
Z_{IF}=\frac{\sqrt{n}T_n^{Par}}{\widehat{\tau}_{IF}}.
\label{eq:Z_IF}
\end{equation}

Under \(H_0\) and the regularity conditions stated in Proposition~\ref{propo1},
\begin{equation}
Z_{IF}\xrightarrow{d}\mathcal{N}(0,1).
\label{eq:Z_IF_asymptotic}
\end{equation}

For the one-sided alternative of insufficient follow-up, the corresponding \(p\)-value is
\begin{equation}
p_{IF}=1-\Phi(Z_{IF}),
\label{eq:pvalue_IF}
\end{equation}
and \(H_0\) is rejected at significance level \(\alpha\) whenever
\begin{equation}
Z_{IF}>z_{1-\alpha}.
\label{eq:decision_IF}
\end{equation}

Equivalently, the rejection rule can be expressed as
\begin{equation}
T_n^{Par}>z_{1-\alpha}\frac{\widehat{\tau}_{IF}}{\sqrt{n}}.
\label{eq:decision_IF_critical}
\end{equation}

The influence-function approach provides an analytic implementation of the asymptotic distribution established in Proposition~\ref{propo1}. The centered nonparametric bootstrap remains the primary finite-sample inferential procedure for the PFST, whereas the influence-function approximation is used to evaluate the finite-sample accuracy of the asymptotic Gaussian approximation.

\subsection{Plateau distance criterion (PDC)}

The PDC defines the minimum sufficient follow-up time according to the proximity of the population survival function to its asymptotic cure plateau. Under the standard mixture cure model, the cure fraction \(p\) is the limiting value of the population survival function. According to the PDC, follow-up is therefore considered sufficiently long once the remaining difference between the population survival probability and the cure fraction falls below a prespecified tolerance.

Let \(0<\delta<1-p\) denote a prespecified tolerance representing the maximum acceptable absolute deviation between the population survival function and the cure plateau. The minimum sufficient follow-up time is defined as
\begin{equation}
t_P(\delta)=\inf\left\{t\geq0:S_p(t)-p\leq\delta\right\}.
\label{eq:tP_definition}
\end{equation}

Using the mixture cure model in Equation~\eqref{mixtura}, the condition in Equation~\eqref{eq:tP_definition} is equivalent to \(S_0(t)\leq\delta/(1-p)\). Thus, for any parametric mixture cure model with an invertible susceptible survival function,
\begin{equation}
t_P(\delta)=S_0^{-1}\left(\frac{\delta}{1-p}\right).
\label{eq:general_inverse}
\end{equation}

Under the Weibull specification in Equation~\eqref{eq:weibull_survival}, substituting the susceptible survival function into Equation~\eqref{eq:general_inverse} yields
\begin{equation}
t_P(\delta)=b\left[-\log\left(\frac{\delta}{1-p}\right)\right]^{1/a}.
\label{eq:tP}
\end{equation}

In practice, replacing the unknown parameters \(p\), \(a\), and \(b\) by their maximum likelihood estimates gives
\begin{equation}
\widehat{t}_P(\delta)=\widehat{b}\left[-\log\left(\frac{\delta}{1-\widehat{p}}\right)\right]^{1/\widehat{a}}.
\label{eq:tP_hat}
\end{equation}

For fixed values of \((\widehat{p},\widehat{a},\widehat{b})\), Equation~\eqref{eq:tP_hat} shows that smaller values of \(\delta\) impose a more stringent definition of follow-up sufficiency and therefore produce larger estimates of the minimum required follow-up. The PDC provides a population-level measure by identifying the earliest time at which the fitted population survival function is within a prespecified tolerance of the estimated cure plateau.

\subsection{Residual survival criterion (RSC)}

The RSC defines the minimum sufficient follow-up time on the susceptible survival scale. Under this criterion, follow-up is considered sufficiently long once the probability that a susceptible individual remains event-free beyond a given time falls below a prespecified tolerance.

Let \(0<\varepsilon<1\) denote a prespecified tolerance representing the maximum acceptable residual survival probability among susceptible individuals. The minimum sufficient follow-up time is defined as
\begin{equation}
t_R(\varepsilon)=\inf\left\{t\geq0:S_0(t)\leq\varepsilon\right\}.
\label{eq:tR_definition}
\end{equation}

For a general parametric mixture cure model, whenever the inverse survival function of the susceptible population is available, Equation~\eqref{eq:tR_definition} can be written as
\begin{equation}
t_R(\varepsilon)=S_0^{-1}(\varepsilon).
\label{eq:tR_general}
\end{equation}

Under the Weibull specification in Equation~\eqref{eq:weibull_survival}, substituting the susceptible survival function into Equation~\eqref{eq:tR_general} yields
\begin{equation}
t_R(\varepsilon)=b\left[-\log(\varepsilon)\right]^{1/a}.
\label{eq:tR}
\end{equation}

The quantity \(t_R(\varepsilon)\) is the earliest follow-up time at which no more than a proportion \(\varepsilon\) of susceptible individuals is expected to remain event-free. Smaller values of \(\varepsilon\) correspond to more stringent follow-up requirements and therefore produce larger minimum follow-up times. In practice, replacing the unknown parameters with their maximum likelihood estimates yields
\begin{equation}
\widehat{t}_R(\varepsilon)=\widehat{b}\left[-\log(\varepsilon)\right]^{1/\widehat{a}}.
\label{eq:tR_hat}
\end{equation}

The RSC provides a susceptible-level measure of follow-up sufficiency by controlling the remaining probability of future failures among susceptible individuals.

\subsection{Relationship between the PDC and the RSC}
\label{subsec:relationship_between_criteria}

The two criteria defined in the previous subsections are motivated by different perspectives on follow-up sufficiency. The PDC is defined on the population survival scale and controls the distance between the population survival curve and the asymptotic cure plateau. In contrast, the RSC is formulated on the susceptible survival scale and controls the remaining probability of future failures among susceptible individuals.

Establishing the relationship between these criteria shows that the proposed framework does not introduce competing definitions of follow-up sufficiency. Instead, it provides two complementary formulations of the same underlying concept, allowing investigators to assess follow-up sufficiency from either a population- or susceptible-level perspective.

Under the standard mixture cure model in Equation~\eqref{mixtura}, the relationship between the two criteria follows from the identity

\begin{equation}
    S_p(t)-p=(1-p)S_0(t),
    \label{eq:decomposition}
\end{equation}

% the relationship between the two criteria follows immediately from the identity

Substituting Equation~\eqref{eq:decomposition} into the definition of the PDC gives

% {\color{blue} A Equação do modelo de mistura padrão está sendo repetida, será que não podemos apenas mencionar??}
% {\color{red}

% Under the standard mixture cure model, the relationship between the two criteria follows immediately from the identity

% \begin{equation}
% S_p(t)-p=(1-p)S_0(t),
% \label{eq:decomposition}
% \end{equation}

% which decomposes the population survival function into the cure fraction and the survival contribution of susceptible individuals.
% }
% Substituting Equation~\eqref{eq:decomposition} into the definition of the PDC gives

\[
(1-p)S_0(t)\le\delta,
\]

which is equivalent to

\[
S_0(t)\le\frac{\delta}{1-p}.
\]

Hence, the two criteria are equivalent whenever

\begin{equation}
\varepsilon=\frac{\delta}{1-p}.
\label{eq:relation}
\end{equation}

The criteria differ not in the resulting follow-up time but in the scientific interpretation of the tolerance parameter. Whereas \(\delta\) measures proximity to the cure plateau on the population survival scale, \(\varepsilon\) quantifies the remaining survival probability among susceptible individuals.

The PDC is naturally expressed on the population survival scale and is most appropriate when the objective is to quantify how closely the overall survival curve has approached the cure plateau or to assess the stabilization of the estimated cure fraction. In contrast, the RSC is formulated on the susceptible survival scale and directly controls the residual probability that a susceptible individual remains event-free beyond the end of follow-up, making it preferable when the remaining uncertainty among susceptible individuals is of primary interest.

Under the transformation in Equation~\eqref{eq:relation}, both criteria produce the same minimum sufficient follow-up time,

\[
t_P(\delta)=t_R(\varepsilon).
\]

Researchers may therefore select the criterion that best reflects the scientific objective of their study without affecting the resulting estimate of the minimum sufficient follow-up time.

% Although mathematically equivalent, the two criteria provide complementary perspectives on follow-up sufficiency. The PDC is appropriate when the objective is to quantify the proximity of the population survival curve to the cure plateau, whereas the RSC is preferable when the interest lies in controlling the remaining probability of future failures among susceptible individuals. Researchers may therefore select the criterion that best reflects the scientific objective of their study without affecting the resulting estimate of the minimum sufficient follow-up time.

\section{Simulation study}
\label{sec:simulation_study}
The simulation study has two objectives. First, we investigate the inferential performance of the proposed PFST in finite samples by comparing the centered bootstrap procedure with its asymptotic influence-function approximation. Second, we evaluate the finite-sample behavior of the proposed estimators of the minimum sufficient follow-up time under different censoring intensities, cure fractions, and sample sizes.

% The simulation study is divided into two complementary parts. The first evaluates the finite-sample performance of the PFST by comparing the centered nonparametric bootstrap with the asymptotic influence-function procedure. The second investigates the finite-sample behavior of the minimum sufficient follow-up times estimated by the PDC and the RSC.

In both parts, susceptible failure times are generated from a Weibull distribution with shape parameter \(a=1.5\) and scale parameter \(b=1.5\). The two experiments differ in their estimation strategy. In the PFST experiment, \(a\) and \(b\) are fixed at their true generating values and only the cure fraction \(p\) is estimated. This controlled design isolates the comparison between the bootstrap and influence-function procedures. In the PDC and RSC experiment, \(a\), \(b\), and \(p\) are estimated jointly by maximum likelihood in each Monte Carlo replication. This second design incorporates the sampling uncertainty associated with estimation of the susceptible survival distribution and therefore provides a more realistic assessment of the finite-sample behavior of the estimated minimum follow-up times.

Individuals are independently classified as cured with probability \(p\), in which case their failure time is set to infinity. Administrative censoring times are independently generated from a \(\operatorname{Uniform}(0,\lambda)\) distribution. For individual \(i\), the observed follow-up time and event indicator are
\[
Y_i=\min(T_i,C_i)
\qquad\text{and}\qquad
\Delta_i=\mathbf{1}\{T_i\leq C_i\},
\]
respectively, where \(T_i\) denotes the failure time and \(C_i\) denotes the administrative censoring time.

The simulation considers
\[
N=\{100,500,1000,5000,10000\},\qquad
P=\{0.25,0.50,0.75\},\qquad
\Lambda=\{2,3.5,5\}.
\]

The three values of $\lambda$ represent short, intermediate, and long administrative follow-up periods, respectively, thereby generating increasing degrees of follow-up adequacy.

% The simulation considers
% \[
% \mathcal{N}=\{100,500,1000,5000,10000\},\qquad
% \mathcal{P}=\{0.25,0.50,0.75\},\qquad
% \Lambda=\{2,3.5,5\}.
% \]
% The PDC is evaluated at
% \[
% \mathcal{D}=\{0.050,0.025,0.010\},
% \]
% whereas the RSC is evaluated at
% \[
% \mathcal{E}=\{0.050,0.025,0.010\}.
% \]

The PDC is evaluated at

\[
D=\{0.050,0.025,0.010\},
\]

whereas the RSC is evaluated at

\[
E=\{0.050,0.025,0.010\}.
\]

These tolerance values represent increasingly stringent definitions of follow-up sufficiency. Smaller tolerances require the population survival curve (or, equivalently, the susceptible survival probability) to approach its limiting value more closely before follow-up is considered sufficient.

Apart from the different estimation strategies, both simulation experiments use the same data-generating mechanism, parameter grid, and tolerance levels, but they were conducted as separate Monte Carlo experiments because they employ different estimation strategies and numbers of replications. The PFST experiment uses \(R=500\) replications with \(a=b=1.5\) fixed, whereas the PDC and RSC experiment uses \(R=1000\) replications with \(a\), \(b\), and \(p\) estimated jointly. Within the second experiment, the PDC and RSC are calculated from the same fitted model in each replication, allowing their finite-sample behavior to be compared under identical simulated samples.

Because the susceptible Weibull distribution has unbounded upper support, whereas the administrative censoring distribution has upper endpoint \(\lambda<\infty\), all simulated scenarios satisfy \(H_1:\tau_C<\tau_{F_0}\).

Figure~\ref{fig:plateau_lambda235} illustrates the common data-generating mechanism under the three administrative censoring scenarios for \(p=0.50\). The fitted mixture cure curve displayed in the figure corresponds to the controlled PFST experiment, in which \(a=b=1.5\) are fixed and only \(p\) is estimated.

\begin{figure}[!ht]
\centering
\begin{subfigure}[t]{0.32\textwidth}
\centering
\includegraphics[width=\linewidth]{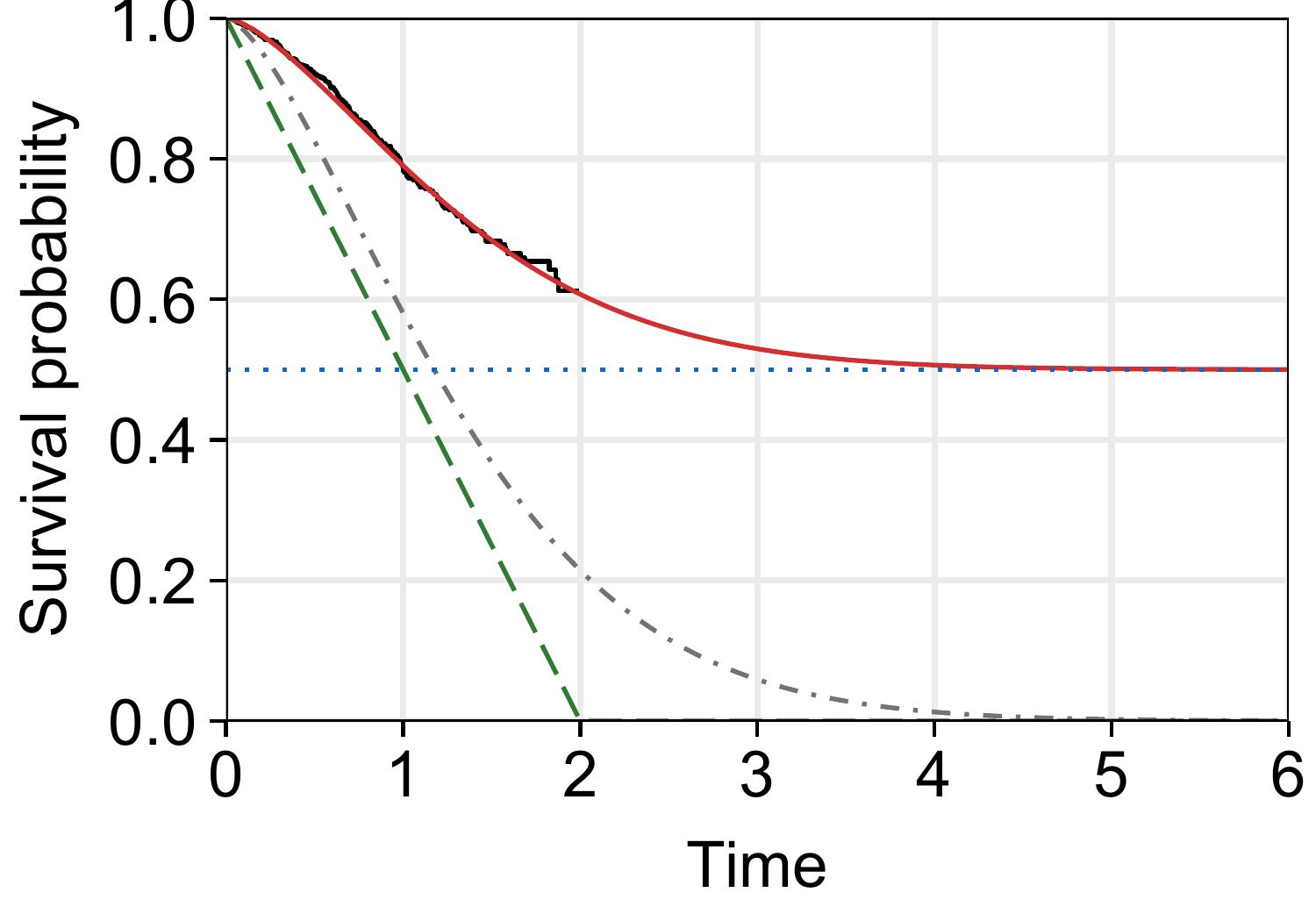}
\caption{\(\lambda=2\)}
\label{fig:plateau_lambda2}
\end{subfigure}
\hfill
\begin{subfigure}[t]{0.32\textwidth}
\centering
\includegraphics[width=\linewidth]{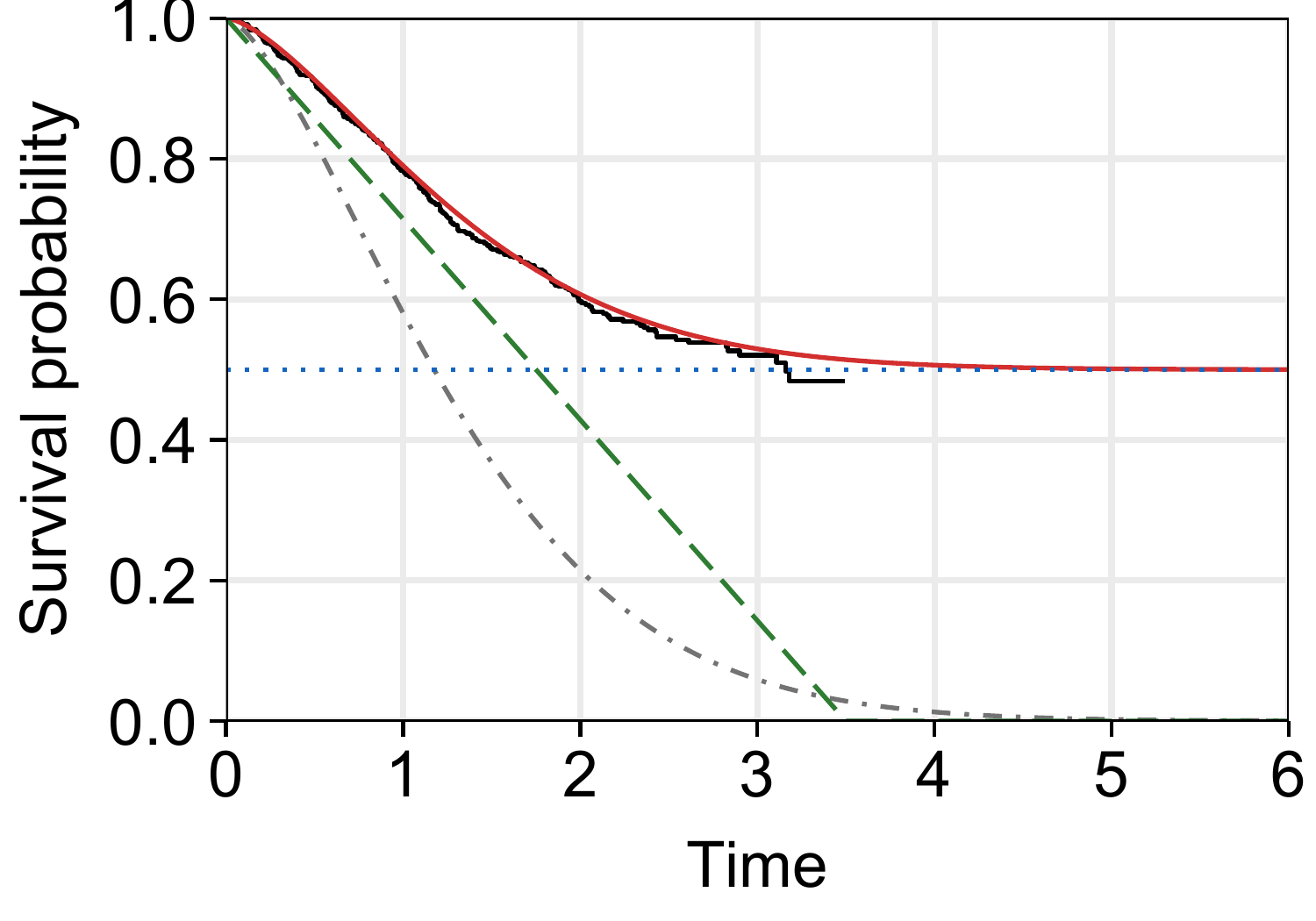}
\caption{\(\lambda=3.5\)}
\label{fig:plateau_lambda35}
\end{subfigure}
\hfill
\begin{subfigure}[t]{0.32\textwidth}
\centering
\includegraphics[width=\linewidth]{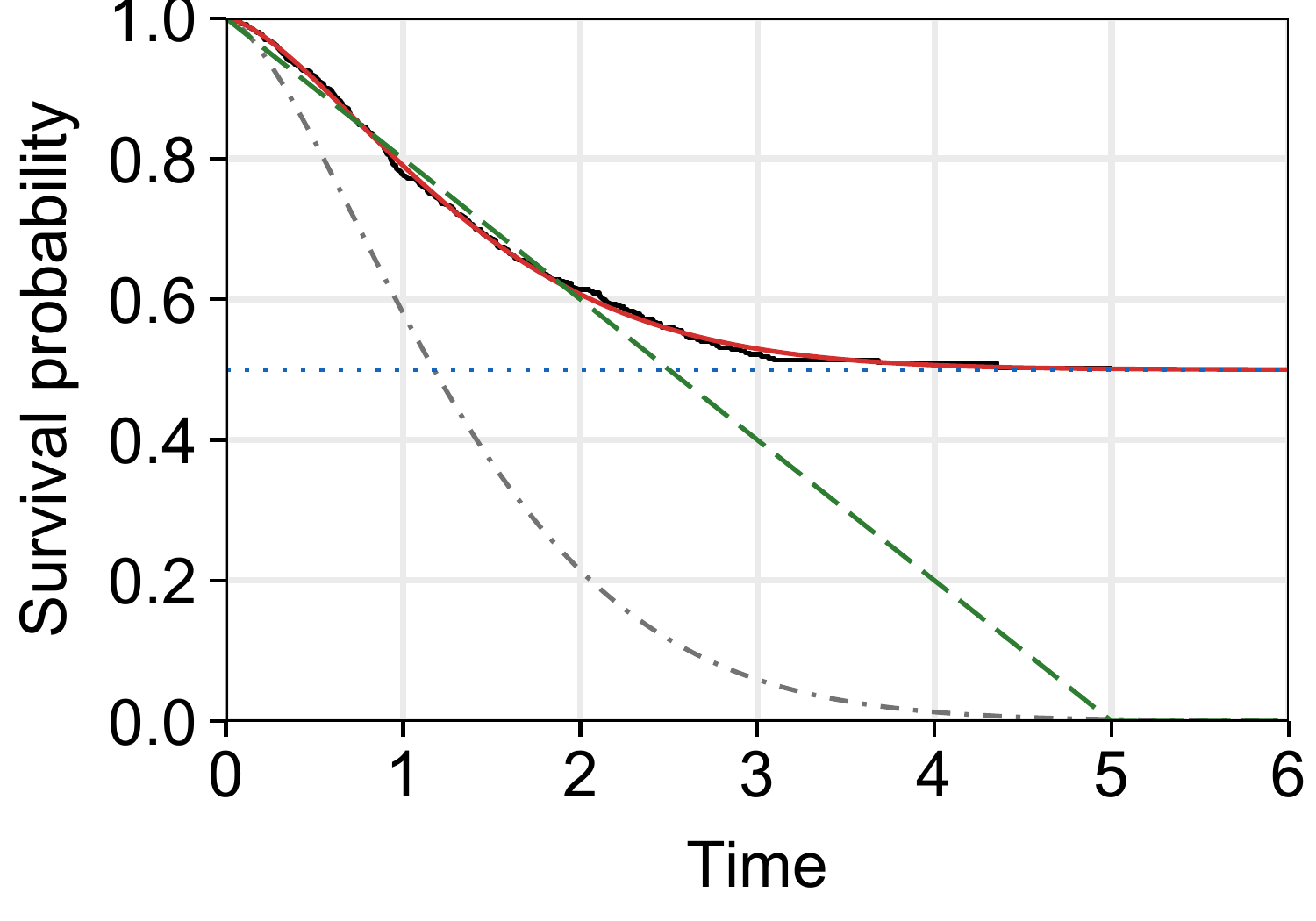}
\caption{\(\lambda=5\)}
\label{fig:plateau_lambda5}
\end{subfigure}
\caption{Illustration of the common data-generating mechanism under different administrative censoring distributions. Susceptible failure times follow a Weibull distribution with \(a=b=1.5\), the cure fraction is \(p=0.50\), and censoring times follow a \(\operatorname{Uniform}(0,\lambda)\) distribution. The black step function is the Kaplan-Meier estimator, the red curve is the fitted Weibull mixture cure model from the controlled PFST experiment, in which only \(p\) is estimated, the gray dot-dashed curve is the theoretical Weibull survival function for susceptible individuals, and the green long-dashed curve is the theoretical survival function of the administrative censoring time. The blue horizontal dotted line indicates the estimated cure fraction.}
\label{fig:plateau_lambda235}
\end{figure}

\subsection{Finite-sample performance of the PFST}
\label{subsec:simulation_pfst}

We evaluate the PFST using the centered nonparametric bootstrap and the asymptotic influence-function procedure described in Section~\ref{sec:criteria}. Both procedures are applied to the same simulated datasets and use the statistic \(\Tparam\) defined in Equation~\eqref{eq:PFST_statistic}. For each combination of \(n\), \(p\), and \(\lambda\), we perform \(R=500\) Monte Carlo replications. The nominal significance level is \(\alpha=0.05\), and the bootstrap procedure uses \(B=500\) valid resamples.

In each original sample, \(p\) is estimated by maximum likelihood with \(a=b=1.5\) fixed. The terminal Kaplan-Meier estimate \(\pKM\) is then calculated and combined with \(\phat\) to obtain \(\Tparam\). In each bootstrap resample, the observed pairs \((Y_i,\Delta_i)\) are sampled with replacement, \(a\) and \(b\) remain fixed at \(1.5\), and only \(p\) is re-estimated. The asymptotic procedure is the one-dimensional specialization of the influence-function representation developed in Subsection~\ref{subsec:influence_function_pfst}, with \(\eta_p=\operatorname{logit}(p)\) as the only unknown parametric component.

To ensure a paired comparison, we calculate the empirical rejection percentages of the bootstrap and asymptotic procedures from the same jointly valid Monte Carlo replications. A replication is jointly valid if the estimate of \(p\) is interior and numerically valid, the terminal Kaplan-Meier estimator can be calculated, the influence-function variance is finite and positive, and \(B\) valid bootstrap statistics are obtained. Algorithm~\ref{alg:simulation_pfst} summarizes the Monte Carlo procedure used to evaluate the PFST.

\begin{algorithm}[ht!]
\footnotesize
\caption{Monte Carlo simulation procedure for the PFST}
\label{alg:simulation_pfst}
\begin{algorithmic}[1]
\Require Sample sizes \(\mathcal{N}\), cure fractions \(\mathcal{P}\),
censoring limits \(\Lambda\), fixed Weibull parameters \(a=b=1.5\),
\(R=500\), \(B=500\), \(\alpha=0.05\), and at most \(5B\) bootstrap attempts
\For{each \(\lambda\in\Lambda\)}
    \For{each \(p\in\mathcal{P}\)}
        \For{each \(n\in\mathcal{N}\)}
            \For{\(r=1,\ldots,R\)}
                \State Generate cure indicators \(Z_i\sim\operatorname{Bernoulli}(p)\)
                \State Generate \(T_i\sim\operatorname{Weibull}(1.5,1.5)\) if
                \(Z_i=0\), set \(T_i=\infty\) if \(Z_i=1\), and generate
                \(C_i\sim\operatorname{Uniform}(0,\lambda)\)
                \State Set \(Y_i=\min(T_i,C_i)\) and
                \(\Delta_i=\mathbf{1}\{T_i\leq C_i\}\)
                \State Estimate \(p\) by maximum likelihood with \(a=b=1.5\)
                fixed, obtaining \(\widehat p\)
                \If{the fit is numerically valid and \(\widehat p\in(0,1)\)}
                    \State Calculate the terminal Kaplan-Meier estimate
                    \(\widehat p_{KM}\) and
                    \(T_n^{Par}=\widehat p_{KM}-\widehat p\)
                    \State Calculate the centered joint influence contributions
                    \(\widetilde{IF}_{T,i}
                    =\widetilde{IF}_{KM,i}-\widetilde{IF}_{p,i}\)
                    \State Set
                    \(\widehat{\tau}_{IF}^{\,2}
                    =n^{-1}\sum_{i=1}^{n}\widetilde{IF}_{T,i}^{\,2}\)
                    and
                    \(c_{IF}=z_{1-\alpha}\widehat{\tau}_{IF}/\sqrt n\)
                    \State Record
                    \(I_{IF}=\mathbf{1}\{T_n^{Par}>c_{IF}\}\)
                    \State Initialize \(b=0\), \(m=0\), and
                    \(\mathcal{D}^{*}=\varnothing\)
                    \While{\(b<B\) and \(m<5B\)}
                        \State Set \(m\gets m+1\) and resample the observed pairs
                        \((Y_i,\Delta_i)\) with replacement
                        \State Re-estimate only \(p\), calculate
                        \(T_{n,m}^{Par,*}\), and verify numerical validity
                        \If{the bootstrap fit and terminal Kaplan-Meier estimate are valid}
                            \State Set \(b\gets b+1\) and append
                            \(D_b^{*}=T_{n,m}^{Par,*}-T_n^{Par}\) to
                            \(\mathcal{D}^{*}\)
                        \EndIf
                    \EndWhile
                    \If{\(b=B\) and \(\widehat{\tau}_{IF}^{\,2}\) is finite and positive}
                        \State Set
                        \(c_{\mathrm{boot}}=Q_{1-\alpha}(\mathcal{D}^{*})\)
                        and record
                        \(I_{\mathrm{boot}}
                        =\mathbf{1}\{T_n^{Par}>c_{\mathrm{boot}}\}\)
                        \State Mark the Monte Carlo replication as jointly valid
                    \Else
                        \State Mark the Monte Carlo replication as invalid
                    \EndIf
                \Else
                    \State Mark the Monte Carlo replication as invalid
                \EndIf
            \EndFor
            \State Let \(R_v\) be the number of jointly valid replications
            \State Compute
            \(100R_v^{-1}\sum I_{\mathrm{boot}}\) and
            \(100R_v^{-1}\sum I_{IF}\)
        \EndFor
    \EndFor
\EndFor
\end{algorithmic}
\end{algorithm}

Since all scenarios satisfy \(H_1\), the rejection percentages in Table~\ref{tab:pfst_bootstrap_asymptotic} represent empirical power.

\begin{table}[ht!]
\centering
\scriptsize
\caption{Empirical power of the PFST obtained using the centered nonparametric bootstrap and the asymptotic influence-function procedure, with \(a=b=1.5\) fixed at their true generating values and only \(p\) estimated. Results are based on \(R=500\), \(B=500\), and \(\alpha=0.05\).}
\label{tab:pfst_bootstrap_asymptotic}
\setlength{\tabcolsep}{7pt}
\renewcommand{\arraystretch}{0.90}
\begin{adjustbox}{max width=\textwidth,max totalheight=0.88\textheight,keepaspectratio}
\begin{tabular}{cccccccc}
\toprule
\multirow[c]{2}{*}{\(p\)}
& \multirow[c]{2}{*}{\(n\)}
& \multicolumn{2}{c}{\(\lambda=2.0\)}
& \multicolumn{2}{c}{\(\lambda=3.5\)}
& \multicolumn{2}{c}{\(\lambda=5.0\)} \\
\cmidrule(lr){3-4}
\cmidrule(lr){5-6}
\cmidrule(lr){7-8}
& &
\makecell{Bootstrap rejection\\of \(H_0\) (\%)}
& \makecell{Asymptotic IF rejection\\of \(H_0\) (\%)}
& \makecell{Bootstrap rejection\\of \(H_0\) (\%)}
& \makecell{Asymptotic IF rejection\\of \(H_0\) (\%)}
& \makecell{Bootstrap rejection\\of \(H_0\) (\%)}
& \makecell{Asymptotic IF rejection\\of \(H_0\) (\%)} \\
\midrule
\multirow[c]{5}{*}{\(0.25\)}
& 100   & 77.39 & 76.99 & 40.12 & 41.53 & 17.51 & 19.11 \\
& 500   & 91.55 & 90.74 & 49.70 & 48.70 & 19.00 & 17.80 \\
& 1000  & 95.78 & 94.98 & 46.89 & 45.29 & 20.40 & 19.00 \\
& 5000  & 99.60 & 99.40 & 78.76 & 76.55 & 24.20 & 21.60 \\
& 10000 & 100.00 & 99.80 & 85.20 & 83.80 & 22.60 & 21.00 \\
\cmidrule(lr){1-8}
\multirow[c]{5}{*}{\(0.50\)}
& 100   & 71.49 & 71.29 & 36.60 & 35.80 & 17.00 & 17.20 \\
& 500   & 89.60 & 88.20 & 42.20 & 40.80 & 18.44 & 17.03 \\
& 1000  & 94.20 & 93.60 & 47.20 & 43.60 & 16.40 & 14.80 \\
& 5000  & 98.80 & 98.20 & 68.60 & 67.60 & 16.20 & 15.00 \\
& 10000 & 99.20 & 99.00 & 82.40 & 80.20 & 22.20 & 20.00 \\
\cmidrule(lr){1-8}
\multirow[c]{5}{*}{\(0.75\)}
& 100   & 64.13 & 64.93 & 32.00 & 32.60 & 17.60 & 17.40 \\
& 500   & 80.00 & 77.20 & 38.20 & 35.40 & 15.20 & 15.20 \\
& 1000  & 88.40 & 87.20 & 38.60 & 36.20 & 13.20 & 12.20 \\
& 5000  & 98.00 & 97.20 & 51.80 & 49.60 & 15.40 & 14.40 \\
& 10000 & 98.60 & 97.60 & 64.20 & 63.20 & 14.80 & 14.20 \\
\bottomrule
\end{tabular}
\end{adjustbox}
\end{table}

The bootstrap and asymptotic influence-function procedures yield similar empirical power throughout the simulation grid. The largest absolute difference between their rejection percentages is \(3.6\) percentage points, observed for \(\lambda=3.5\), \(p=0.50\), and \(n=1000\); differences are smaller in the remaining scenarios. This close agreement indicates that the asymptotic influence-function approximation reproduces the centered bootstrap well when the susceptible failure-time distribution is correctly specified, its parameters are treated as known, and only the cure fraction is estimated.

For \(\lambda=2\), empirical power is already moderately high at \(n=100\) and increases rapidly with sample size, approaching \(100\%\) in all cure-fraction scenarios. For \(\lambda=3.5\), power is intermediate and increases with \(n\), with generally lower rejection percentages as the cure fraction increases. For \(\lambda=5\), empirical power remains low even in large samples because the longer observation period reduces the discrepancy between \(\pKM\) and \(\phat\), making it more difficult to detect in finite samples.

The agreement observed in this controlled experiment does not imply that the susceptible-distribution parameters must be known when applying the PFST. Rather, it provides evidence that the asymptotic variance construction is coherent when the susceptible component is correctly specified and the cure fraction is the primary unknown parametric component.

\subsection{Finite-sample behavior of the minimum sufficient follow-up criteria}
\label{subsec:simulation_criteria}

The second part of the simulation evaluates the PDC and RSC using \(R=1000\) Monte Carlo replications. In each replication, the Weibull shape parameter \(a\), scale parameter \(b\), and cure fraction \(p\) are estimated jointly by maximum likelihood. For every numerically valid fitted model, the corresponding minimum sufficient follow-up times are calculated from Equations~\eqref{eq:tP_hat} and~\eqref{eq:tR_hat}. The results are summarized by their Monte Carlo means and sample standard deviations.

Because the PDC depends on \((\widehat a,\widehat b,\widehat p)\), its sampling variability reflects uncertainty in all three model parameters. The RSC depends on \((\widehat a,\widehat b)\) and therefore also varies among replications, even though its theoretical target does not depend on the cure fraction. The cure fraction can nevertheless affect the finite-sample precision of the RSC estimator indirectly because larger cure fractions reduce the expected number of susceptible individuals and observed failures.

The Monte Carlo procedure used to evaluate the two criteria is summarized in Algorithm~\ref{alg:simulation_criteria}.

\begin{algorithm}[ht!]
\footnotesize
\caption{Monte Carlo simulation procedure for the PDC and RSC}
\label{alg:simulation_criteria}
\begin{algorithmic}[1]
\Require Sample sizes \(\mathcal{N}\), cure fractions \(\mathcal{P}\),
censoring limits \(\Lambda\), tolerances
\(\mathcal{D}=\mathcal{E}=\{0.050,0.025,0.010\}\), true Weibull
parameters \(a=b=1.5\), and \(R=1000\)
\For{each \(\lambda\in\Lambda\)}
    \For{each \(p\in\mathcal{P}\)}
        \For{each \(n\in\mathcal{N}\)}
            \For{\(r=1,\ldots,R\)}
                \State Generate a right-censored sample from the Weibull mixture cure model
                \State Jointly estimate \(a\), \(b\), and \(p\) by maximum likelihood, obtaining \((\widehat a,\widehat b,\widehat p)\)
                \If{the fitted model is numerically valid}
                    \For{each \(\delta\in\mathcal{D}\)}
                        \State Calculate and record
                        \[
                        \widehat t_P(\delta)
                        =
                        \widehat b
                        \left[
                        -\log\left\{
                        \frac{\delta}{1-\widehat p}
                        \right\}
                        \right]^{1/\widehat a}
                        \]
                    \EndFor
                    \For{each \(\varepsilon\in\mathcal{E}\)}
                        \State Calculate and record
                        \[
                        \widehat t_R(\varepsilon)
                        =
                        \widehat b[-\log(\varepsilon)]^{1/\widehat a}
                        \]
                    \EndFor
                \Else
                    \State Mark the replication as invalid
                \EndIf
            \EndFor
            \State For each tolerance, compute the Monte Carlo mean and sample standard deviation using the valid replications
        \EndFor
    \EndFor
\EndFor
\end{algorithmic}
\end{algorithm}

Table~\ref{tab:pdc_simulation} presents the Monte Carlo means and sample standard deviations of the minimum sufficient follow-up times estimated using the PDC.

\begin{table}[ht!]
\centering
\scriptsize
\caption{Monte Carlo means and sample standard deviations of the estimated minimum sufficient follow-up times obtained using the PDC. Results are based on \(R=1000\) replications with \(a\), \(b\), and \(p\) estimated jointly.}
\label{tab:pdc_simulation}
\setlength{\tabcolsep}{4pt}
\renewcommand{\arraystretch}{1.0}
\begin{adjustbox}{max width=\textwidth}
\begin{tabular}{ccccccccc}
\toprule
\multirow[c]{2}{*}{$\lambda$}
& \multirow[c]{2}{*}{$p$}
& \multirow[c]{2}{*}{$n$}
& \multicolumn{2}{c}{$\delta=0.050$}
& \multicolumn{2}{c}{$\delta=0.025$}
& \multicolumn{2}{c}{$\delta=0.010$} \\
\cmidrule(lr){4-5}
\cmidrule(lr){6-7}
\cmidrule(lr){8-9}
& & &
$\widehat{t}_P(0.050)$ & SD &
$\widehat{t}_P(0.025)$ & SD &
$\widehat{t}_P(0.010)$ & SD \\
\midrule

\multirow[c]{15}{*}{$2$}
& \multirow[c]{5}{*}{$0.25$}
& 100   & 2.964 & 1.444 & 3.446 & 1.724 & 4.040 & 2.092 \\
& & 500   & 3.088 & 0.971 & 3.591 & 1.138 & 4.207 & 1.354 \\
& & 1000  & 3.018 & 0.764 & 3.509 & 0.892 & 4.112 & 1.057 \\
& & 5000  & 2.964 & 0.343 & 3.449 & 0.401 & 4.043 & 0.475 \\
& & 10000 & 2.934 & 0.244 & 3.416 & 0.284 & 4.005 & 0.337 \\
\cmidrule(lr){2-9}

& \multirow[c]{5}{*}{$0.50$}
& 100   & 3.421 & 3.102 & 4.051 & 3.751 & 4.833 & 4.615 \\
& & 500   & 3.111 & 1.702 & 3.690 & 2.007 & 4.398 & 2.404 \\
& & 1000  & 2.933 & 1.267 & 3.482 & 1.487 & 4.152 & 1.772 \\
& & 5000  & 2.662 & 0.393 & 3.170 & 0.465 & 3.786 & 0.559 \\
& & 10000 & 2.635 & 0.281 & 3.140 & 0.334 & 3.751 & 0.402 \\
\cmidrule(lr){2-9}

& \multirow[c]{5}{*}{$0.75$}
& 100   & 6.208 & 19.843 & 7.716 & 26.582 & 9.734 & 36.508 \\
& & 500   & 3.547 & 4.422 & 4.350 & 5.252 & 5.328 & 6.335 \\
& & 1000  & 3.025 & 3.294 & 3.745 & 3.900 & 4.613 & 4.689 \\
& & 5000  & 2.167 & 0.592 & 2.743 & 0.722 & 3.424 & 0.890 \\
& & 10000 & 2.107 & 0.357 & 2.672 & 0.439 & 3.338 & 0.545 \\
\midrule

\multirow[c]{15}{*}{$3.5$}
& \multirow[c]{5}{*}{$0.25$}
& 100   & 3.051 & 0.910 & 3.554 & 1.108 & 4.172 & 1.371 \\
& & 500   & 2.962 & 0.411 & 3.450 & 0.499 & 4.047 & 0.613 \\
& & 1000  & 2.934 & 0.269 & 3.416 & 0.326 & 4.006 & 0.400 \\
& & 5000  & 2.922 & 0.114 & 3.401 & 0.138 & 3.987 & 0.170 \\
& & 10000 & 2.918 & 0.078 & 3.397 & 0.095 & 3.984 & 0.117 \\
\cmidrule(lr){2-9}

& \multirow[c]{5}{*}{$0.50$}
& 100   & 3.001 & 1.858 & 3.580 & 2.290 & 4.294 & 2.868 \\
& & 500   & 2.677 & 0.477 & 3.192 & 0.592 & 3.817 & 0.745 \\
& & 1000  & 2.638 & 0.296 & 3.144 & 0.366 & 3.757 & 0.457 \\
& & 5000  & 2.615 & 0.122 & 3.117 & 0.151 & 3.724 & 0.190 \\
& & 10000 & 2.615 & 0.090 & 3.117 & 0.112 & 3.724 & 0.141 \\
\cmidrule(lr){2-9}

& \multirow[c]{5}{*}{$0.75$}
& 100   & 3.926 & 8.736 & 4.922 & 11.128 & 6.182 & 14.436 \\
& & 500   & 2.158 & 0.717 & 2.739 & 0.923 & 3.428 & 1.199 \\
& & 1000  & 2.114 & 0.358 & 2.685 & 0.465 & 3.360 & 0.610 \\
& & 5000  & 2.067 & 0.137 & 2.625 & 0.179 & 3.284 & 0.234 \\
& & 10000 & 2.063 & 0.101 & 2.620 & 0.131 & 3.276 & 0.172 \\
\midrule

\multirow[c]{15}{*}{$5$}
& \multirow[c]{5}{*}{$0.25$}
& 100   & 2.950 & 0.544 & 3.434 & 0.676 & 4.027 & 0.852 \\
& & 500   & 2.918 & 0.223 & 3.397 & 0.274 & 3.983 & 0.343 \\
& & 1000  & 2.922 & 0.156 & 3.402 & 0.192 & 3.988 & 0.240 \\
& & 5000  & 2.913 & 0.069 & 3.390 & 0.085 & 3.974 & 0.107 \\
& & 10000 & 2.916 & 0.047 & 3.395 & 0.058 & 3.981 & 0.073 \\
\cmidrule(lr){2-9}

& \multirow[c]{5}{*}{$0.50$}
& 100   & 2.692 & 1.150 & 3.211 & 1.474 & 3.844 & 1.922 \\
& & 500   & 2.628 & 0.252 & 3.132 & 0.318 & 3.742 & 0.408 \\
& & 1000  & 2.615 & 0.162 & 3.117 & 0.203 & 3.725 & 0.260 \\
& & 5000  & 2.617 & 0.073 & 3.119 & 0.092 & 3.726 & 0.118 \\
& & 10000 & 2.615 & 0.053 & 3.116 & 0.067 & 3.723 & 0.085 \\
\cmidrule(lr){2-9}

& \multirow[c]{5}{*}{$0.75$}
& 100   & 2.378 & 3.472 & 3.026 & 4.614 & 3.816 & 6.249 \\
& & 500   & 2.074 & 0.272 & 2.633 & 0.358 & 3.295 & 0.479 \\
& & 1000  & 2.065 & 0.180 & 2.623 & 0.238 & 3.281 & 0.321 \\
& & 5000  & 2.062 & 0.083 & 2.620 & 0.109 & 3.277 & 0.145 \\
& & 10000 & 2.060 & 0.057 & 2.615 & 0.075 & 3.270 & 0.100 \\

\bottomrule
\end{tabular}
\end{adjustbox}
\end{table}

For \(\lambda=2\), all mean PDC requirements exceed the maximum potential follow-up of \(2\). Joint estimation is particularly unstable for small samples and high cure fractions. For example, when \(p=0.75\) and \(n=100\), the mean estimates are \(6.208\), \(7.716\), and \(9.734\) for \(\delta=0.050\), \(0.025\), and \(0.010\), respectively, with corresponding standard deviations of \(19.843\), \(26.582\), and \(36.508\). This instability reflects the difficulty of jointly estimating the susceptible survival distribution and cure fraction when the observation window is short and relatively few susceptible failures are observed.

For \(\lambda=3.5\), the relationship between the estimated minimum requirement and the available follow-up depends on the tolerance and cure fraction. Once joint estimation becomes stable, the mean PDC times for \(\delta=0.050\) are below \(3.5\). At \(n=10000\), the mean requirements for \(\delta=0.010\) are \(3.984\), \(3.724\), and \(3.276\) for \(p=0.25\), \(0.50\), and \(0.75\), respectively. Thus, under this more stringent tolerance, the requirement remains above the censoring limit for \(p=0.25\) and \(p=0.50\), but falls below it for \(p=0.75\). The small-sample scenario with \(p=0.75\) remains highly variable; at \(n=100\), the standard deviations range from \(8.736\) to \(14.436\) across the three tolerances.

For \(\lambda=5\), all mean PDC times are shorter than the maximum available follow-up. The estimates approach their theoretical targets as the sample size increases, and their standard deviations decrease markedly with \(n\). At \(n=10000\), the mean estimates are \(2.916\), \(3.395\), and \(3.981\) for \(p=0.25\); \(2.615\), \(3.116\), and \(3.723\) for \(p=0.50\); and \(2.060\), \(2.615\), and \(3.270\) for \(p=0.75\). The corresponding standard deviations range from \(0.047\) to \(0.100\). For moderate and large sample sizes, the mean PDC times decrease with the cure fraction, as expected from the absolute-distance formulation because the corresponding susceptible-scale tolerance is \(\delta/(1-p)\).

Overall, the PDC results show that more stringent tolerances require longer follow-up and that joint estimation can produce substantial uncertainty under short follow-up, small sample sizes, and high cure fractions. As the sample size and available observation period increase, the estimates become more stable and approach the corresponding theoretical minimum follow-up times. Table~\ref{tab:rsc_simulation} presents the Monte Carlo means and sample standard deviations of the minimum sufficient follow-up times estimated using the RSC.

\begin{table}[ht!]
\centering
\scriptsize
\caption{Monte Carlo means and sample standard deviations of the estimated minimum sufficient follow-up times obtained using the RSC. Results are based on \(R=1000\) replications with \(a\), \(b\), and \(p\) estimated jointly.}
\label{tab:rsc_simulation}
\setlength{\tabcolsep}{4pt}
\renewcommand{\arraystretch}{1.0}
\begin{adjustbox}{max width=\textwidth}
\begin{tabular}{ccccccccc}
\toprule
\multirow[c]{2}{*}{$\lambda$}
& \multirow[c]{2}{*}{$p$}
& \multirow[c]{2}{*}{$n$}
& \multicolumn{2}{c}{$\varepsilon=0.050$}
& \multicolumn{2}{c}{$\varepsilon=0.025$}
& \multicolumn{2}{c}{$\varepsilon=0.010$} \\
\cmidrule(lr){4-5}
\cmidrule(lr){6-7}
\cmidrule(lr){8-9}
& & &
$\widehat{t}_R(0.050)$ & SD &
$\widehat{t}_R(0.025)$ & SD &
$\widehat{t}_R(0.010)$ & SD \\
\midrule

\multirow[c]{15}{*}{$2$}
& \multirow[c]{5}{*}{$0.25$}
& 100   & 3.113 & 1.371 & 3.583 & 1.662 & 4.166 & 2.039 \\
& & 500   & 3.248 & 0.890 & 3.739 & 1.065 & 4.343 & 1.289 \\
& & 1000  & 3.194 & 0.697 & 3.673 & 0.832 & 4.263 & 1.003 \\
& & 5000  & 3.159 & 0.315 & 3.631 & 0.376 & 4.211 & 0.453 \\
& & 10000 & 3.134 & 0.225 & 3.601 & 0.267 & 4.177 & 0.322 \\
\cmidrule(lr){2-9}

& \multirow[c]{5}{*}{$0.50$}
& 100   & 3.742 & 2.987 & 4.345 & 3.655 & 5.101 & 4.536 \\
& & 500   & 3.535 & 1.607 & 4.082 & 1.927 & 4.759 & 2.336 \\
& & 1000  & 3.385 & 1.196 & 3.900 & 1.426 & 4.537 & 1.721 \\
& & 5000  & 3.157 & 0.396 & 3.628 & 0.471 & 4.209 & 0.567 \\
& & 10000 & 3.133 & 0.287 & 3.601 & 0.342 & 4.176 & 0.411 \\
\cmidrule(lr){2-9}

& \multirow[c]{5}{*}{$0.75$}
& 100   & 6.840 & 19.732 & 8.309 & 26.523 & 10.281 & 36.469 \\
& & 500   & 4.434 & 4.253 & 5.160 & 5.112 & 6.068 & 6.221 \\
& & 1000  & 4.000 & 3.172 & 4.637 & 3.802 & 5.429 & 4.612 \\
& & 5000  & 3.228 & 0.698 & 3.714 & 0.831 & 4.312 & 1.001 \\
& & 10000 & 3.167 & 0.441 & 3.640 & 0.525 & 4.223 & 0.632 \\
\midrule

\multirow[c]{15}{*}{$3.5$}
& \multirow[c]{5}{*}{$0.25$}
& 100   & 3.231 & 0.884 & 3.721 & 1.088 & 4.327 & 1.357 \\
& & 500   & 3.160 & 0.407 & 3.634 & 0.497 & 4.218 & 0.613 \\
& & 1000  & 3.135 & 0.268 & 3.603 & 0.326 & 4.179 & 0.402 \\
& & 5000  & 3.124 & 0.114 & 3.589 & 0.139 & 4.161 & 0.172 \\
& & 10000 & 3.121 & 0.079 & 3.586 & 0.096 & 4.158 & 0.118 \\
\cmidrule(lr){2-9}

& \multirow[c]{5}{*}{$0.50$}
& 100   & 3.464 & 1.862 & 4.010 & 2.306 & 4.692 & 2.894 \\
& & 500   & 3.178 & 0.532 & 3.656 & 0.651 & 4.246 & 0.807 \\
& & 1000  & 3.138 & 0.332 & 3.607 & 0.404 & 4.184 & 0.497 \\
& & 5000  & 3.117 & 0.138 & 3.581 & 0.169 & 4.152 & 0.208 \\
& & 10000 & 3.117 & 0.103 & 3.581 & 0.126 & 4.152 & 0.155 \\
\cmidrule(lr){2-9}

& \multirow[c]{5}{*}{$0.75$}
& 100   & 4.922 & 8.882 & 5.844 & 11.319 & 7.038 & 14.670 \\
& & 500   & 3.235 & 0.950 & 3.726 & 1.165 & 4.334 & 1.446 \\
& & 1000  & 3.189 & 0.516 & 3.669 & 0.629 & 4.261 & 0.777 \\
& & 5000  & 3.128 & 0.201 & 3.595 & 0.245 & 4.170 & 0.302 \\
& & 10000 & 3.122 & 0.148 & 3.588 & 0.180 & 4.161 & 0.222 \\
\midrule

\multirow[c]{15}{*}{$5$}
& \multirow[c]{5}{*}{$0.25$}
& 100   & 3.147 & 0.557 & 3.617 & 0.693 & 4.197 & 0.872 \\
& & 500   & 3.120 & 0.231 & 3.585 & 0.284 & 4.157 & 0.354 \\
& & 1000  & 3.125 & 0.161 & 3.590 & 0.198 & 4.163 & 0.247 \\
& & 5000  & 3.115 & 0.072 & 3.579 & 0.088 & 4.148 & 0.110 \\
& & 10000 & 3.119 & 0.049 & 3.584 & 0.060 & 4.156 & 0.075 \\
\cmidrule(lr){2-9}

& \multirow[c]{5}{*}{$0.50$}
& 100   & 3.188 & 1.202 & 3.672 & 1.532 & 4.271 & 1.983 \\
& & 500   & 3.129 & 0.299 & 3.596 & 0.369 & 4.171 & 0.461 \\
& & 1000  & 3.117 & 0.192 & 3.582 & 0.236 & 4.154 & 0.294 \\
& & 5000  & 3.119 & 0.086 & 3.583 & 0.106 & 4.154 & 0.133 \\
& & 10000 & 3.116 & 0.063 & 3.580 & 0.077 & 4.150 & 0.096 \\
\cmidrule(lr){2-9}

& \multirow[c]{5}{*}{$0.75$}
& 100   & 3.470 & 3.807 & 4.038 & 4.994 & 4.757 & 6.676 \\
& & 500   & 3.139 & 0.421 & 3.608 & 0.518 & 4.186 & 0.647 \\
& & 1000  & 3.128 & 0.282 & 3.595 & 0.350 & 4.169 & 0.439 \\
& & 5000  & 3.123 & 0.127 & 3.589 & 0.156 & 4.162 & 0.195 \\
& & 10000 & 3.118 & 0.088 & 3.582 & 0.108 & 4.153 & 0.135 \\

\bottomrule
\end{tabular}
\end{adjustbox}
\end{table}

For \(\lambda=2\), all mean RSC requirements exceed the maximum potential follow-up of \(2\). Joint estimation is especially unstable for small samples with large cure fractions. When \(p=0.75\) and \(n=100\), the mean estimates are \(6.840\), \(8.309\), and \(10.281\) for \(\varepsilon=0.050\), \(0.025\), and \(0.010\), respectively, with corresponding standard deviations of \(19.732\), \(26.523\), and \(36.469\). These results show that insufficient follow-up affects not only the estimated cure fraction but also the precision with which the susceptible survival distribution can be recovered.

For \(\lambda=3.5\), the theoretical requirements are approximately \(3.117\), \(3.581\), and \(4.152\) for \(\varepsilon=0.050\), \(0.025\), and \(0.010\), respectively. Hence, only the first target lies below the available follow-up limit. At \(n=10000\), the estimated means range across the three cure fractions from \(3.117\) to \(3.122\), from \(3.581\) to \(3.588\), and from \(4.152\) to \(4.161\) for the three respective tolerances. The principal exception again occurs in the smallest samples with \(p=0.75\); at \(n=100\), the standard deviations are \(8.882\), \(11.319\), and \(14.670\).

For \(\lambda=5\), all mean RSC times are shorter than the maximum available follow-up. For moderate and large sample sizes, the estimates approach the theoretical targets of approximately \(3.117\), \(3.581\), and \(4.152\), and their standard deviations decrease steadily with the sample size. At \(n=10000\), the mean estimates differ by no more than \(0.006\) across the three cure fractions for any tolerance, while the standard deviations range from \(0.049\) to \(0.135\).

Overall, the RSC estimates approach their theoretical targets as the amount of information increases. Although the theoretical RSC target is independent of the cure fraction, the finite-sample precision of its estimator is not. Larger cure fractions reduce the number of susceptible individuals and observed events, increasing uncertainty in \(\widehat a\) and \(\widehat b\), particularly when the observation window is short.

%\newpage
%\clearpage

\section{Applications}
\label{sec:applications}

This section illustrates the proposed framework using two survival datasets with evidence of a cure fraction. The first application considers a population-based cohort of patients with prostate cancer, whereas the second involves patients with triple-negative breast cancer. In each application, we first evaluate follow-up sufficiency using the PFST and then use the PDC and RSC to estimate the minimum follow-up required at different tolerance levels. Together, these procedures assess whether the available follow-up is sufficient and quantify the additional observation time required to support reliable estimation of the cure fraction.

\subsection{Prostate cancer data}
\label{sec:aplicacao}

To illustrate the proposed methodology, we analyze data from a population-based
cancer registry comprising patients diagnosed with prostate cancer
between 2016 and 2022. The event of interest was prostate cancer-specific
death, whereas patients who remained alive at the end of follow-up or
died from other causes were treated as censored observations.

The dataset comprises \(n=30,743\) patients, of whom \(2,350\) experienced
the event of interest and \(28,393\) were censored, yielding a
censoring proportion of \(92.36\%\). The observed follow-up times ranged
from \(0.0027\) years to
\(\tmax=6.90\) years, and the latest observed event occurred at
\(\tilde{t}_{K}=6.57\) years.

We first evaluated follow-up sufficiency using the PFST under the
hypotheses \(H_0:\tau_{F_0}\leq\tau_C\) and
\(H_1:\tau_{F_0}>\tau_C\). Rejection of \(H_0\) indicates evidence that
the available follow-up is insufficient for reliable estimation of the
cure fraction.

We then fitted the Weibull mixture cure model introduced in Section~3
to the data. The maximum likelihood estimates were
\(\widehat{a}=1.1053\), \(\widehat{b}=5.9885\), and
\(\widehat{p}=0.7691\), corresponding to an estimated cure fraction of
\(76.91\%\). The fitted population survival function was

\[
\widehat{S}_p(t)
=
0.7691+
\left(1-0.7691\right)
\exp\left\{
-\left(\frac{t}{5.9885}\right)^{1.1053}
\right\}.
\]

Figure~\ref{fig:aplicacao_prostata}
(\subref{fig:km_weibull_cura_prostata}) shows the Kaplan-Meier
estimator as a black step curve and the population survival curve
estimated by the Weibull cure model as a solid red curve. The estimated
cure fraction, \(\widehat{p}=0.7691\), is represented by a dashed blue
horizontal line, whereas the terminal Kaplan-Meier estimate,
\(\widehat{p}_{KM}=0.8259\), is represented by a dotted gray horizontal
line. Figure~\ref{fig:aplicacao_prostata}
(\subref{fig:hist_parametrico_prostata}) shows the centered bootstrap
distribution used by the PFST together with its asymptotic normal
approximation. The gray histogram represents the centered nonparametric
bootstrap distribution, whereas the solid green curve represents the
asymptotic distribution obtained from the joint influence-function
approximation. The solid red vertical line indicates the observed PFST
statistic, whereas the solid black and blue vertical lines indicate the
bootstrap and asymptotic critical values, respectively.

\begin{figure}[!ht]
\centering
\begin{subfigure}[t]{0.49\textwidth}
\centering
\includegraphics[width=\textwidth]{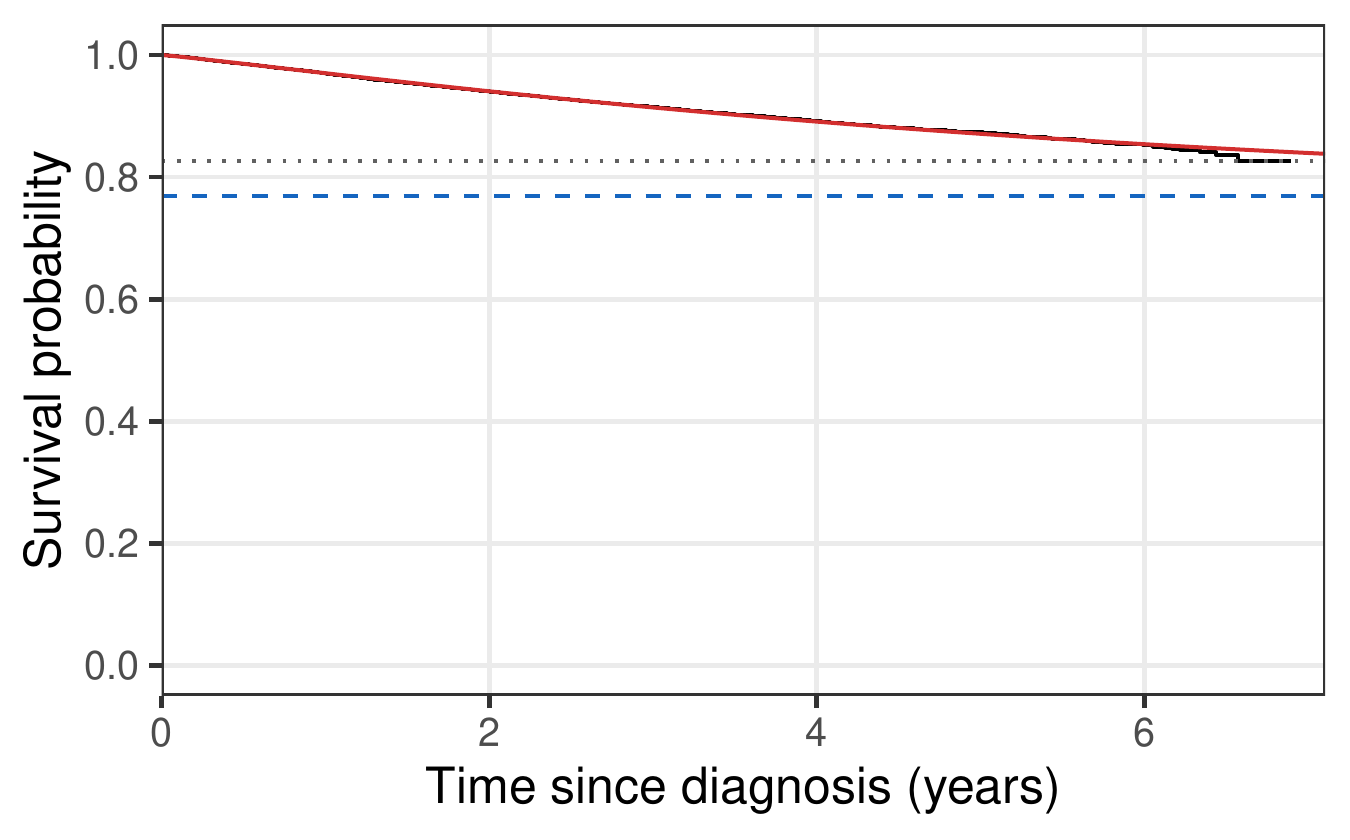}
\caption{Kaplan-Meier estimator and fitted Weibull cure model.}
\label{fig:km_weibull_cura_prostata}
\end{subfigure}
\hfill
\begin{subfigure}[t]{0.49\textwidth}
\centering
\includegraphics[width=\textwidth]{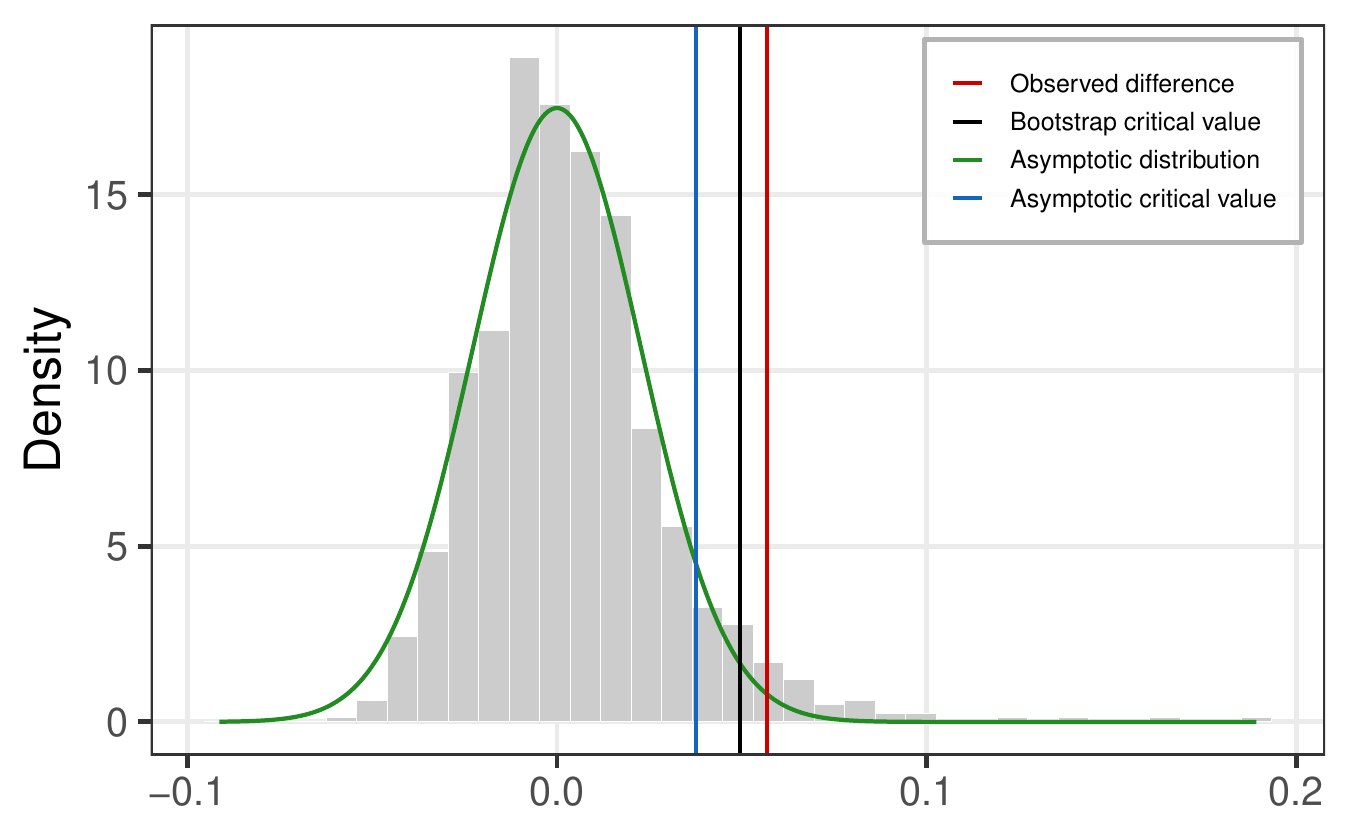}
\caption{Centered bootstrap distribution and asymptotic approximation of
the PFST statistic.}
\label{fig:hist_parametrico_prostata}
\end{subfigure}
\caption{Graphical assessment of the Weibull cure model fit and the PFST
for the prostate cancer data.}
\label{fig:aplicacao_prostata}
\end{figure}

The terminal Kaplan-Meier estimate was
\(\widehat{p}_{KM}=0.8259\), whereas the cure fraction estimated by the
model was \(\widehat{p}=0.7691\). Consequently, the observed PFST
statistic was

\[
\Tparam
=
\widehat{p}_{KM}-\widehat{p}
=
0.0569.
\]

The bootstrap PFST decision was based on a centered nonparametric
bootstrap procedure with \(1000\) valid replications. The estimated
critical value at the \(5\%\) significance level was \(0.0495\), and the
bootstrap \(p\)-value was \(0.035\). The influence-function approximation
yielded \(\widehat{\tau}^{2}=16.0296\), corresponding to an estimated
standard error of \(0.0228\), an asymptotic critical value of \(0.0376\),
a \(Z\)-statistic of \(2.4898\), and an asymptotic \(p\)-value of
\(0.0064\). As shown in Figure~\ref{fig:aplicacao_prostata}
(\subref{fig:hist_parametrico_prostata}), the red vertical line
corresponding to the observed PFST statistic lies to the right of
both the black bootstrap critical value and the blue asymptotic critical
value. Since \(0.0569>0.0495\) and \(0.0569>0.0376\), both procedures
rejected \(H_0\). Thus, the bootstrap and asymptotic procedures provide
concordant evidence that the available follow-up is insufficient for
reliable estimation of the cure fraction. Although the asymptotic
procedure produced a smaller critical value and \(p\)-value, the
inferential conclusion was unchanged.

Given the evidence of insufficient follow-up, we applied the PDC to
estimate the minimum follow-up required at different
tolerance levels. This criterion identifies the minimum time at which
the population survival curve estimated by the Weibull cure model
approaches the cure fraction within a prespecified tolerance.
Figure~\ref{fig:criterios_prostata}
(\subref{fig:plateau_distance_application}) shows the Kaplan-Meier
estimator as a black step curve, the fitted population survival curve as
a solid red curve, and the estimated cure fraction as a dotted blue
horizontal line. The dashed vertical lines, distinguished by color,
represent the minimum follow-up times obtained for
\(\delta\in\{0.100,\allowbreak 0.050,\allowbreak 0.025,\allowbreak
0.010,\allowbreak 0.005\}\).
Figure~\ref{fig:criterios_prostata}
(\subref{fig:residual_survival_application}) uses the same graphical
representation, with colored dashed vertical lines indicating the
minimum follow-up times obtained by the RSC for
\(\varepsilon\in\{0.100,\allowbreak 0.050,\allowbreak 0.025,\allowbreak
0.010,\allowbreak 0.005\}\).

\begin{figure}[!ht]
\centering
\begin{subfigure}[t]{0.49\textwidth}
\centering
\includegraphics[width=\textwidth]{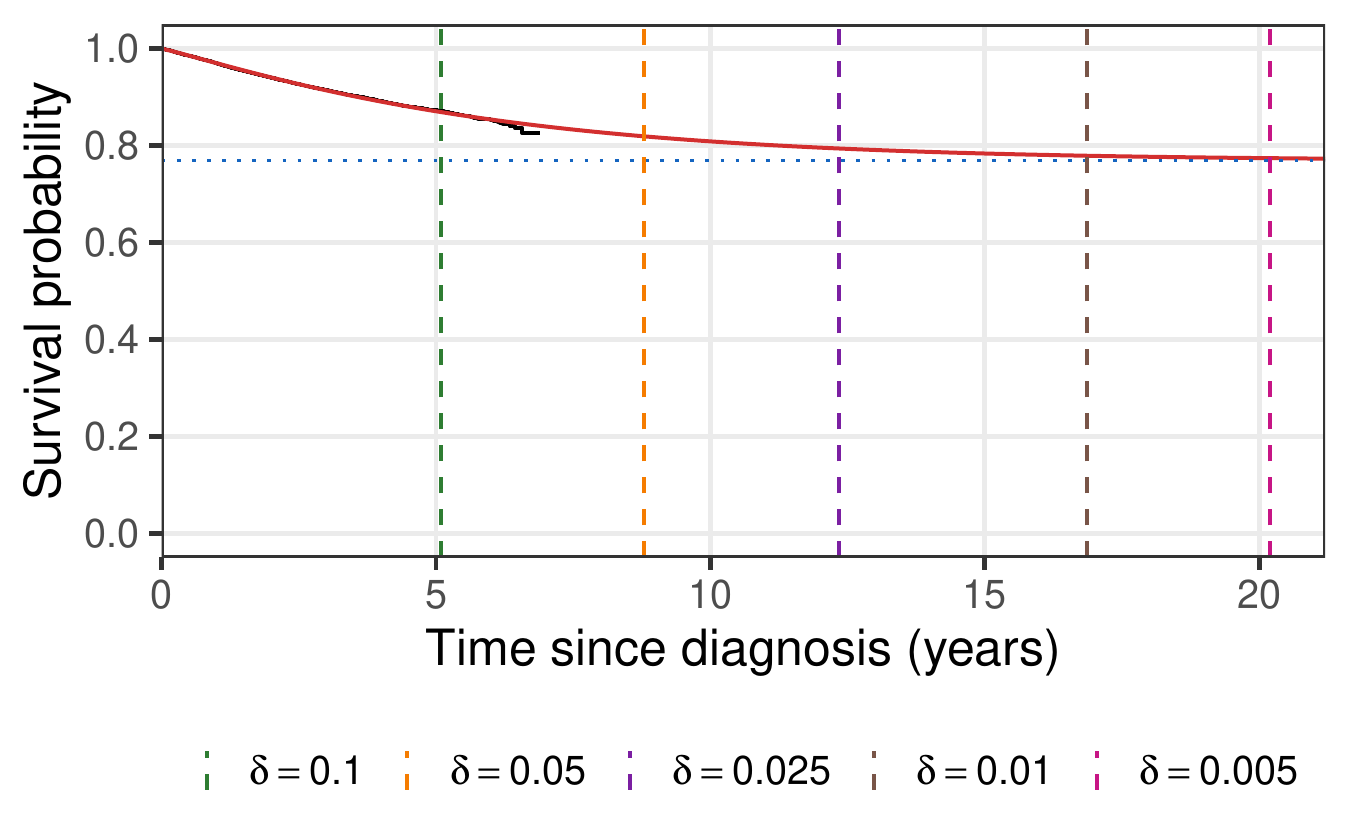}
\caption{Minimum follow-up times obtained by the PDC for different
tolerance levels \(\delta\).}
\label{fig:plateau_distance_application}
\end{subfigure}
\hfill
\begin{subfigure}[t]{0.49\textwidth}
\centering
\includegraphics[width=\textwidth]{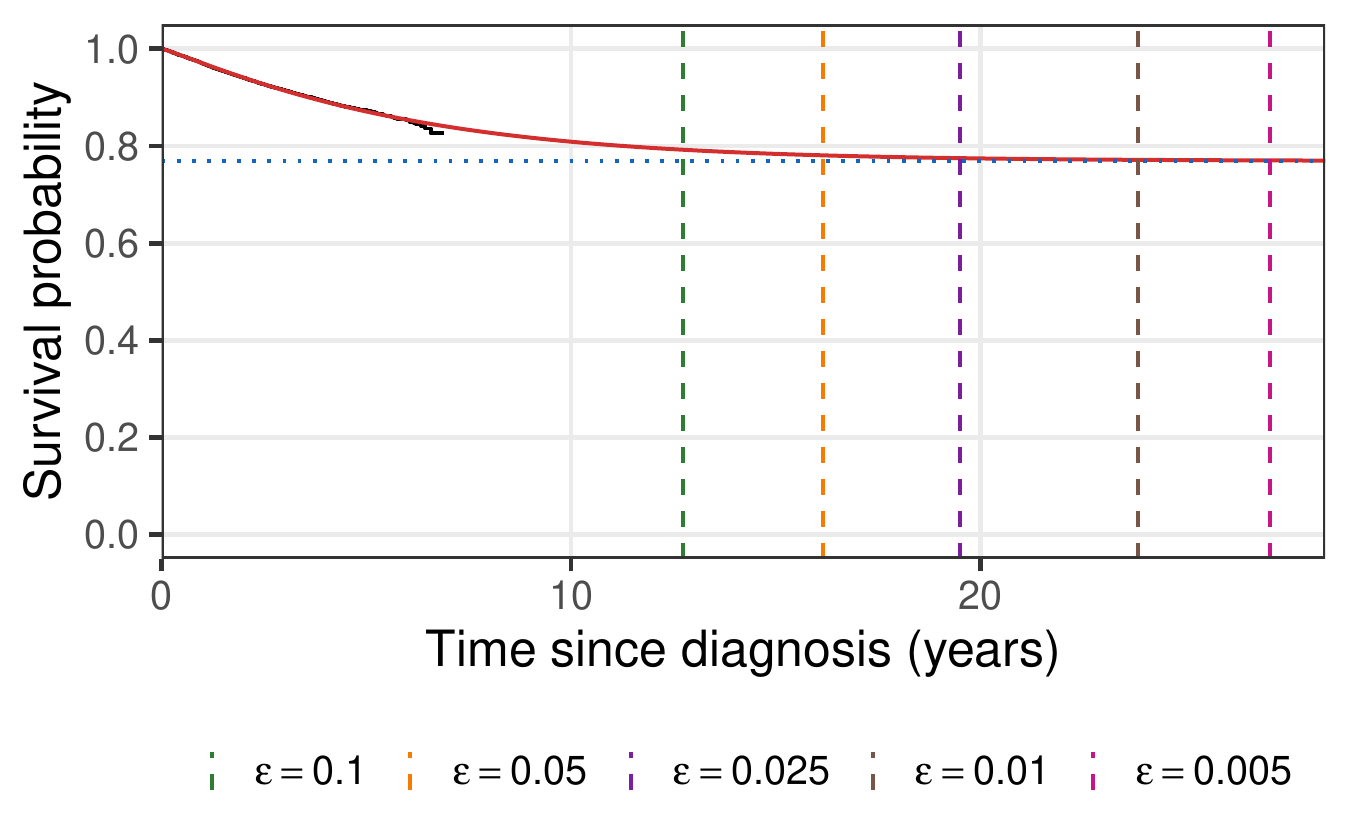}
\caption{Minimum follow-up times obtained by the RSC for different
tolerance levels \(\varepsilon\).}
\label{fig:residual_survival_application}
\end{subfigure}
\caption{Minimum follow-up times estimated by the PDC and RSC for the
prostate cancer data. In both panels, the black step curve represents
the Kaplan-Meier estimator, the solid red curve represents the fitted
Weibull cure model, and the dotted blue horizontal line represents the
estimated cure fraction.}
\label{fig:criterios_prostata}
\end{figure}

Table~\ref{tab:framework_summary_prostate} summarizes the complementary
roles of the PFST, PDC, and RSC in evaluating follow-up sufficiency. We
calculated the time difference as the estimated minimum follow-up
minus the maximum observed follow-up time of \(6.90\) years. Thus,
negative values indicate that the maximum observed follow-up exceeds the
estimated minimum, whereas positive values represent the additional
follow-up time required.

\begin{table}[!ht]
\centering
\caption{Summary of the follow-up sufficiency assessment for the
prostate cancer data.}
\label{tab:framework_summary_prostate}
\renewcommand{\arraystretch}{1.20}
\small
\setlength{\tabcolsep}{8pt}

\begin{tabular}{llcc}
\toprule
\textbf{Method} &
\textbf{Tolerance} &
\makecell{\textbf{Observed or minimum}\\
\textbf{follow-up (years)}} &
\makecell{\textbf{Follow-up assessment}\\
\textbf{and time difference (years)}}\\
\midrule

PFST &
\makecell{Bootstrap: \(p=0.035\)\\Asymptotic: \(p=0.0064\)} &
\makecell{\textit{Maximum observed}\\follow-up: 6.90} &
Insufficient\\

\midrule

PDC &
\(\delta=0.100\) &
5.10 &
Sufficient (\(-1.80\))\\

&
\(\delta=0.050\) &
8.80 &
Insufficient (\(+1.90\))\\

&
\(\delta=0.025\) &
12.34 &
Insufficient (\(+5.44\))\\

&
\(\delta=0.010\) &
16.86 &
Insufficient (\(+9.96\))\\

&
\(\delta=0.005\) &
20.19 &
Insufficient (\(+13.30\))\\

\midrule

RSC &
\(\varepsilon=0.100\) &
12.74 &
Insufficient (\(+5.84\))\\

&
\(\varepsilon=0.050\) &
16.16 &
Insufficient (\(+9.26\))\\

&
\(\varepsilon=0.025\) &
19.51 &
Insufficient (\(+12.61\))\\

&
\(\varepsilon=0.010\) &
23.84 &
Insufficient (\(+16.95\))\\

&
\(\varepsilon=0.005\) &
27.07 &
Insufficient (\(+20.17\))\\

\bottomrule
\end{tabular}
\end{table}

The PFST provided evidence of insufficient follow-up under both the
bootstrap and asymptotic procedures. Consistent with this result, the PDC indicated
that the maximum observed follow-up satisfied only the least stringent
tolerance level, \(\delta=0.100\), exceeding the corresponding minimum
of \(5.10\) years by \(1.80\) years. A total of \(3,082\) patients had
observed follow-up times at or beyond this threshold, including \(49\)
events and \(3,033\) censored observations. For
\(\delta\leq0.050\), the PDC required between \(1.90\) and \(13.30\)
additional years beyond the maximum observed follow-up. None of the RSC
thresholds were attained, and the estimated additional follow-up ranged
from \(5.84\) to \(20.17\) years. Therefore, the available follow-up is
adequate only under a very lenient population-level definition of
proximity to the cure plateau and remains insufficient under moderate
or stringent PDC tolerances and under all RSC tolerances considered.
The value of \(6.90\) years should be interpreted as the longest
individual observation rather than as a common follow-up duration for
the entire cohort.

A direct comparison of the two criteria using identical numerical
tolerance values shows that the RSC produced minimum
follow-up times approximately \(6.9\)--\(7.6\) years longer than those
obtained by the PDC. This difference is expected because equal numerical
values of \(\delta\) and \(\varepsilon\) correspond to different
practical definitions of follow-up sufficiency rather than equivalent
levels of stringency. This apparent discrepancy disappears once the
tolerances are matched according to the theoretical relationship
between the two criteria.

The PDC and RSC are mathematically equivalent when their tolerances are
related by \(\varepsilon=\delta/(1-\widehat{p})\). In the application to
the prostate cancer data, the estimated cure fraction was
\(\widehat{p}=0.7691\), so that \(1-\widehat{p}=0.2309\). The equivalent
tolerance pairs were
\((\delta,\varepsilon)=(0.100,0.4330)\),
\((0.050,0.2165)\),
\((0.025,0.1083)\),
\((0.010,0.0433)\), and
\((0.005,0.0217)\).
Substituting each pair into the expressions for
\(\widehat{t}_P(\delta)\) and \(\widehat{t}_R(\varepsilon)\) yielded the
same minimum follow-up times: \(5.10\), \(8.80\), \(12.34\), \(16.86\),
and \(20.19\) years, respectively. Consequently, the choice between the
two criteria should be guided by the practical interpretation of the
tolerance parameter rather than by numerical comparisons using
identical tolerance values.

This verification confirms that the differences observed when equal
nominal values of \(\delta\) and \(\varepsilon\) are used arise not
from the expressions defining the criteria but from the fact that numerically
equal tolerances represent different conditions. Therefore, the longer
times obtained by the RSC when \(\delta=\varepsilon\)
reflect a more stringent requirement for the residual survival
probability of susceptible individuals rather than a lack of
equivalence between the criteria.

In the present application, the minimum follow-up times were obtained
from a Weibull mixture cure model, whereas \citet{tai2005minimum}
proposed a criterion based on fitting a lognormal distribution to the
survival times of patients who died from prostate cancer. Their
threshold corresponds to the time at which the fitted lognormal
distribution is virtually exhausted, leaving less than \(2.25\%\) of
its upper tail unobserved. Although the underlying parametric models
differ, Tai et al.'s criterion is conceptually equivalent to the RSC
because both are based on controlling the residual survival probability
among susceptible individuals. Using the notation adopted in this
paper, their criterion can be expressed as \(S_0(t)\leq0.0225\),
implying that \(t_{\mathrm{Tai}}\approx t_R(0.0225)\).

The estimated follow-up times are also consistent with the results reported
by \citet{tai2005minimum}, who obtained a threshold of \(24.6\) years
for statistical cure among prostate cancer patients younger than
60 years diagnosed between 1973 and 1977. This value is close
with the more conservative estimates obtained by the RSC, particularly
\(\widehat{t}_R(0.010)=23.84\) years and
\(\widehat{t}_R(0.005)=27.07\) years.

Although the comparison with \citet{tai2005minimum} establishes the
methodological consistency of the RSC with an existing
criterion for minimum follow-up assessment, it is also important to
examine whether the estimated follow-up times agree with evidence from
long-term clinical studies. Table~\ref{tab:clinical_comparison}
summarizes this comparison for representative prostate cancer cohorts.

\begin{table}[ht!]
\centering
\caption{Comparison of the minimum follow-up times estimated by the
proposed framework with evidence from long-term prostate cancer studies.}
\label{tab:clinical_comparison}
\begin{tabular}{p{3.2cm}p{5.3cm}p{2cm}p{4.5cm}}
\hline
\textbf{Study} &
\textbf{Main finding} &
\textbf{Maximum follow-up} &
\textbf{Consistency with the proposed framework}\\
\hline

\citet{klotz2015long}
&
Active surveillance appeared safe over long-term follow-up, with
clinically meaningful outcomes reported after 15 years.
&
19.8 years
&
Consistent with the RSC estimates,
\(\widehat{t}_R(0.050)=16.16\) years and
\(\widehat{t}_R(0.025)=19.51\) years.
\\

\citet{de2023detailed}
&
Metastatic disease and prostate cancer-specific mortality continued to
evolve throughout long-term follow-up.
&
21 years
&
Consistent with the conservative estimates
\(\widehat{t}_P(0.005)=20.19\) years and
\(\widehat{t}_R(0.025)=19.51\) years.
\\

\citet{fraanlund2022results}
&
Longer follow-up changed previous estimates of the number needed to
diagnose.
&
22 years
&
Compatible with the more conservative minimum follow-up times
estimated by the PDC and RSC.
\\

\hline
\end{tabular}
\end{table}

Table~\ref{tab:clinical_comparison} shows that the minimum follow-up
times estimated by the proposed framework are broadly consistent with
evidence from long-term prostate cancer studies. Although these studies
differ in design and clinical objectives, they all indicate that clinically
relevant outcomes continue to emerge over extended follow-up,
supporting the more conservative follow-up times estimated by the
PDC and RSC.

\subsection{Triple-negative breast cancer data}
\label{subsec:mama_triplo_negativo}

The second application considers a dataset of patients diagnosed with
triple-negative breast cancer, originally analyzed by
\citet{milani2021analise}. The study included women treated at the
A.C.Camargo Cancer Center in São Paulo, Brazil, between 2001 and 2013
who underwent neoadjuvant chemotherapy. The event of interest was death
from breast cancer, whereas patients who died from other causes or did
not experience the event by the end of the observation period were
treated as censored observations.

The dataset comprises \(n=78\) patients, of whom \(25\) experienced the
event of interest and \(53\) were censored, yielding a censoring
proportion of \(67.95\%\). The observed follow-up times ranged from
\(9.00\) months to
\(\tmax=162.87\) months, and the latest observed event occurred at
\(\tilde{t}_{K}=60.77\) months.

We first evaluated follow-up sufficiency using the PFST and then fitted
the Weibull mixture cure model introduced in Section~3 to the data.
The maximum likelihood estimates were
\(\widehat{a}=2.2077\), \(\widehat{b}=32.4728\), and
\(\widehat{p}=0.6224\), corresponding to an estimated cure fraction of
\(62.24\%\). The fitted population survival function was

\[
\widehat{S}_p(t)
=
0.6224+
\left(1-0.6224\right)
\exp\left\{
-\left(\frac{t}{32.4728}\right)^{2.2077}
\right\}.
\]

Figure~\ref{fig:aplicacao_mama_triplo_negativo}
(\subref{fig:km_weibull_cura_mama_triplo_negativo}) shows the
Kaplan-Meier estimator as a black step curve and the population survival
curve estimated by the Weibull cure model as a solid red curve. The
estimated cure fraction, \(\widehat{p}=0.6224\), is represented by a
dashed blue horizontal line, whereas the terminal Kaplan-Meier estimate,
\(\widehat{p}_{KM}=0.6243\), is represented by a dotted gray horizontal
line. Figure~\ref{fig:aplicacao_mama_triplo_negativo}
(\subref{fig:hist_parametrico_mama_triplo_negativo}) shows the centered
bootstrap distribution used by the PFST together with its asymptotic
normal approximation. The gray histogram represents the centered
nonparametric bootstrap distribution, whereas the solid green curve represents
the asymptotic distribution obtained from the joint influence-function
approximation. The solid red vertical line indicates the observed PFST
statistic, whereas the solid black and blue vertical lines indicate the
bootstrap and asymptotic critical values, respectively.

\begin{figure}[!ht]
\centering
\begin{subfigure}[t]{0.49\textwidth}
\centering
\includegraphics[width=\textwidth]
{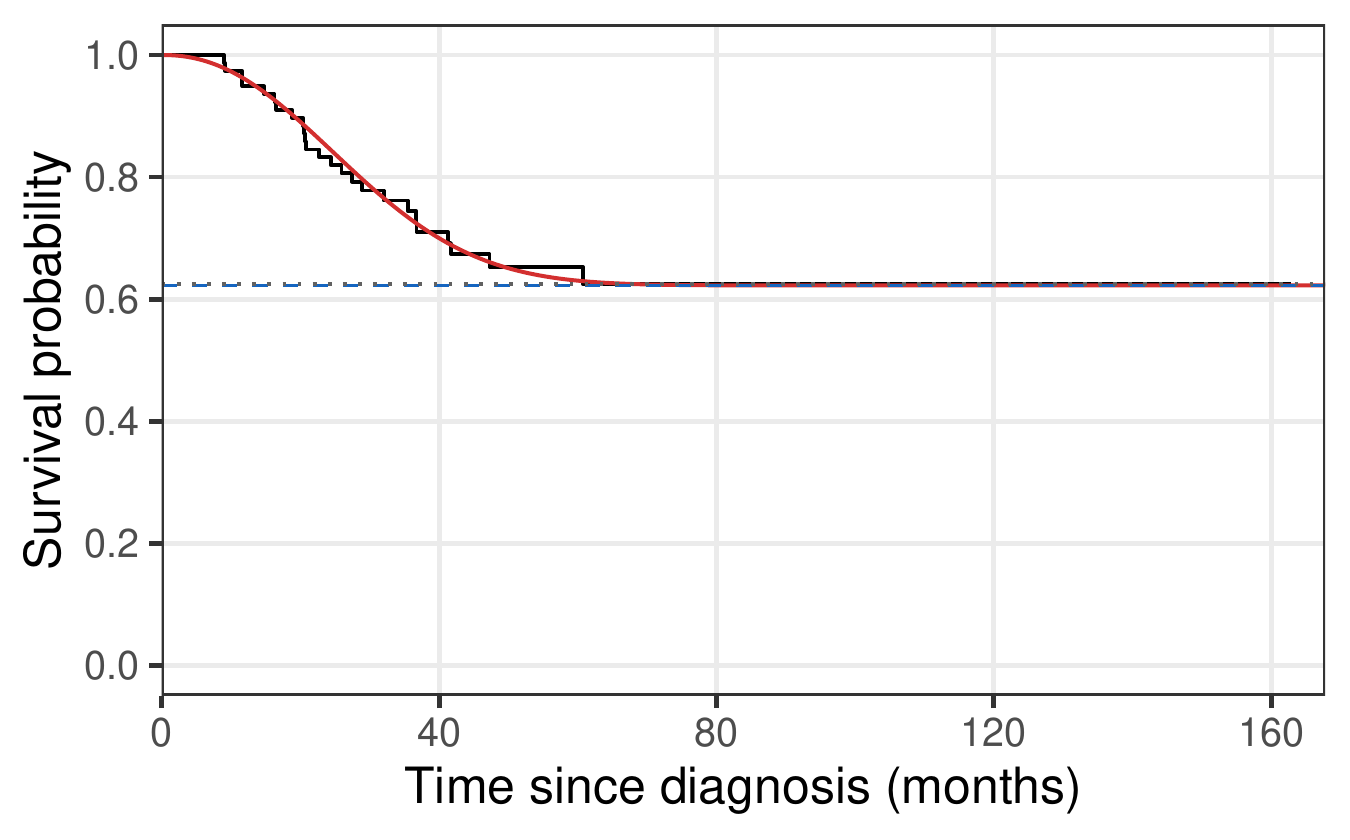}
\caption{Kaplan-Meier estimator and fitted Weibull cure model.}
\label{fig:km_weibull_cura_mama_triplo_negativo}
\end{subfigure}
\hfill
\begin{subfigure}[t]{0.49\textwidth}
\centering
\includegraphics[width=\textwidth]
{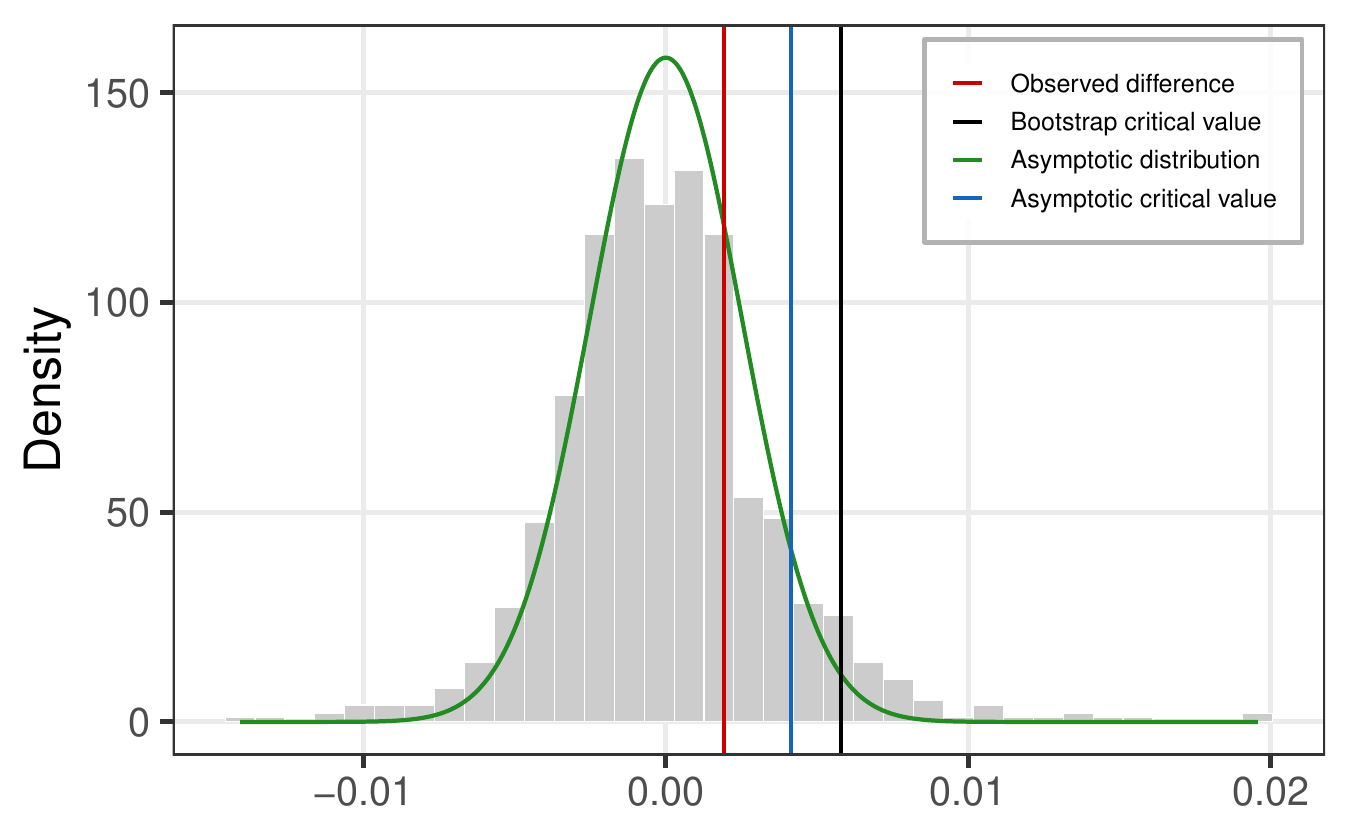}
\caption{Centered bootstrap distribution and asymptotic approximation of
the PFST statistic.}
\label{fig:hist_parametrico_mama_triplo_negativo}
\end{subfigure}
\caption{Graphical assessment of the Weibull cure model fit and the
PFST for the triple-negative breast cancer data.}
\label{fig:aplicacao_mama_triplo_negativo}
\end{figure}

The terminal Kaplan-Meier estimate was
\(\widehat{p}_{KM}=0.6243\), whereas the cure fraction estimated by the
model was \(\widehat{p}=0.6224\). Consequently, the observed PFST statistic was

\[
\Tparam
=
\widehat{p}_{KM}-\widehat{p}
=
0.0019.
\]

The bootstrap PFST decision was based on a centered nonparametric
bootstrap procedure with \(1000\) valid replications. The estimated
critical value at the \(5\%\) significance level was \(0.0058\), and the
bootstrap \(p\)-value was \(0.227\). The influence-function approximation
yielded \(\widehat{\tau}^{2}=0.000495\), corresponding to an estimated
standard error of \(0.0025\), an asymptotic critical value of \(0.0041\),
and an asymptotic \(p\)-value of \(0.223\). As shown in
Figure~\ref{fig:aplicacao_mama_triplo_negativo}
(\subref{fig:hist_parametrico_mama_triplo_negativo}), the red vertical
line corresponding to the observed PFST statistic lies to the left
of both the black bootstrap critical value and the blue asymptotic
critical value. Since \(0.0019<0.0058\) and \(0.0019<0.0041\), neither
procedure rejected \(H_0\). Thus, neither procedure provided evidence
that the available follow-up was insufficient for reliable estimation of
the cure fraction.

We subsequently applied the PDC and RSC to estimate the minimum
follow-up times associated with different tolerance levels.
Figure~\ref{fig:criterios_mama_triplo_negativo}
(\subref{fig:plateau_distance_mama_triplo_negativo}) shows the
Kaplan-Meier estimator as a black step curve, the fitted population
survival curve as a solid red curve, and the estimated cure fraction as
a dotted blue horizontal line. The colored dashed vertical lines represent
the minimum follow-up times obtained for
\(\delta\in\{0.100,\allowbreak 0.050,\allowbreak 0.025,\allowbreak 0.010,\allowbreak 0.005\}\).
Figure~\ref{fig:criterios_mama_triplo_negativo}
(\subref{fig:residual_survival_mama_triplo_negativo}) uses the same
graphical representation, with colored dashed vertical lines indicating
the minimum follow-up times obtained by the RSC for
\(\varepsilon\in\{0.100,\allowbreak 0.050,\allowbreak 0.025,\allowbreak 0.010,\allowbreak 0.005\}\).

\begin{figure}[!ht]
\centering
\begin{subfigure}[t]{0.49\textwidth}
\centering
\includegraphics[width=\textwidth]
{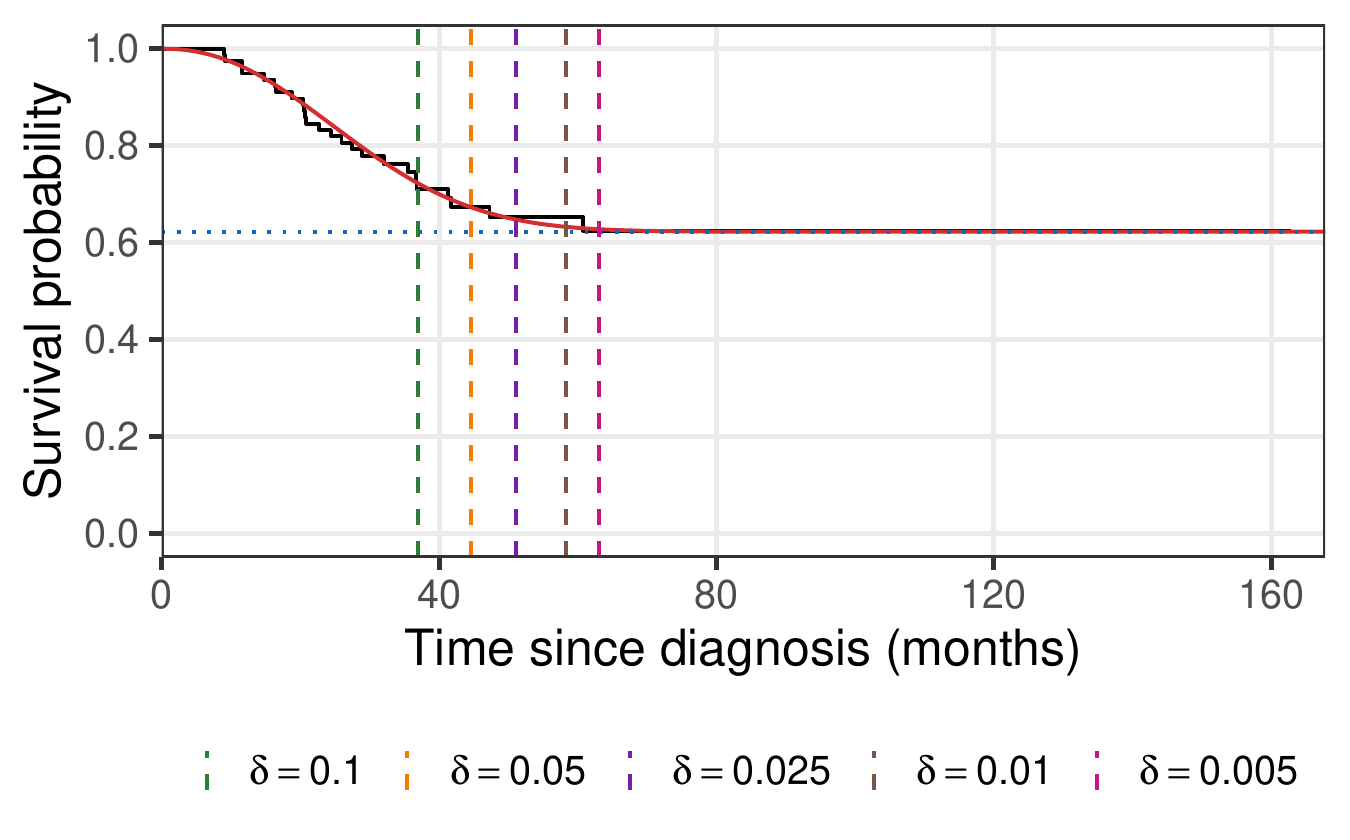}
\caption{Minimum follow-up times obtained by the PDC for different
tolerance levels \(\delta\).}
\label{fig:plateau_distance_mama_triplo_negativo}
\end{subfigure}
\hfill
\begin{subfigure}[t]{0.49\textwidth}
\centering
\includegraphics[width=\textwidth]
{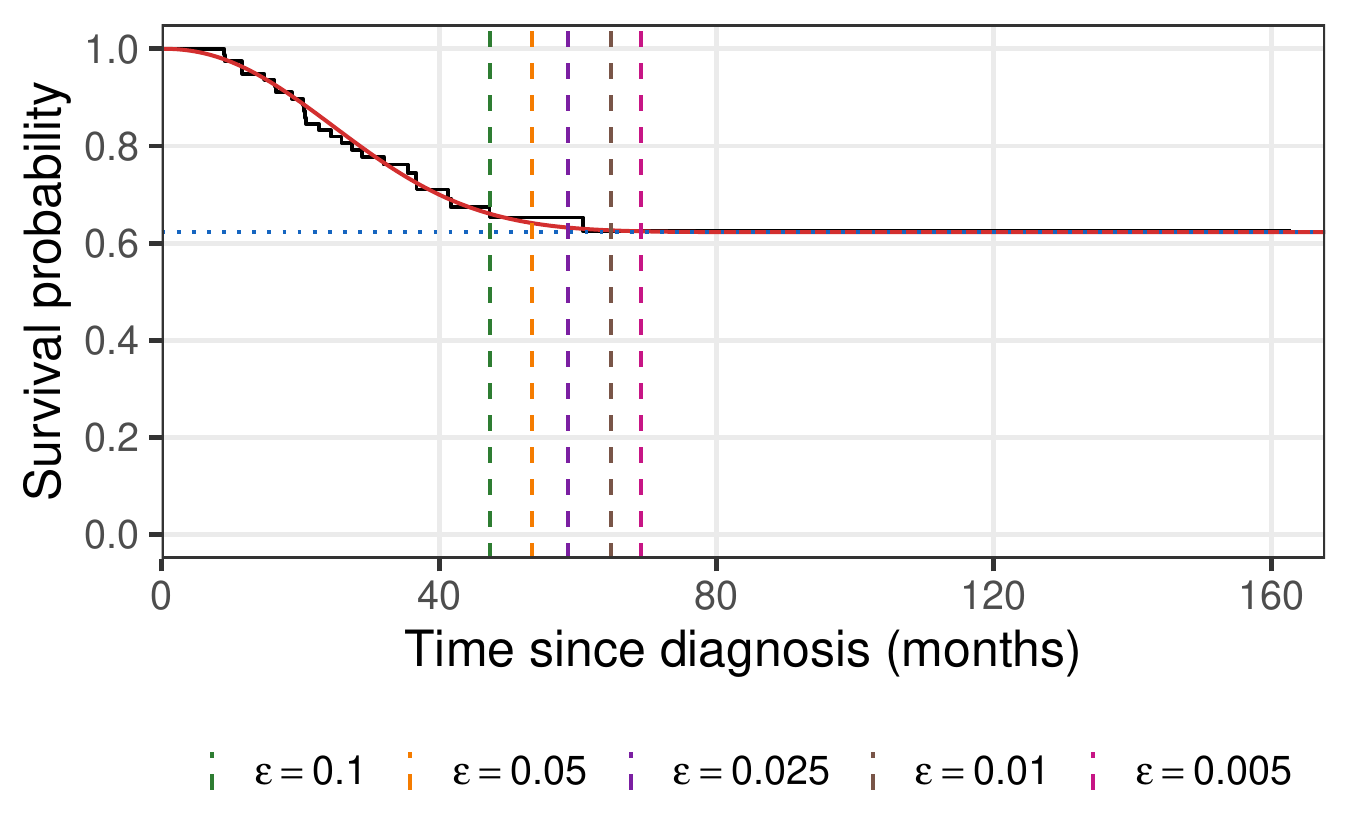}
\caption{Minimum follow-up times obtained by the RSC for different
tolerance levels \(\varepsilon\).}
\label{fig:residual_survival_mama_triplo_negativo}
\end{subfigure}
\caption{Minimum follow-up times estimated by the PDC and RSC for the
triple-negative breast cancer data. In both panels, the black step curve
represents the Kaplan-Meier estimator, the solid red curve represents
the fitted Weibull cure model, and the dotted blue horizontal line
represents the estimated cure fraction.}
\label{fig:criterios_mama_triplo_negativo}
\end{figure}

Table~\ref{tab:framework_summary_tnbc} summarizes the PFST,
PDC, and RSC results for the triple-negative breast cancer data. We
calculated the time difference as the estimated minimum follow-up minus
the maximum observed follow-up time of \(162.87\) months. Thus, negative
values indicate that the maximum observed follow-up exceeds the estimated
minimum, whereas positive values represent the additional follow-up time
required.

\begin{table}[!ht]
\centering
\caption{Summary of the follow-up sufficiency assessment for the
triple-negative breast cancer data.}
\label{tab:framework_summary_tnbc}
\renewcommand{\arraystretch}{1.20}
\small
\setlength{\tabcolsep}{8pt}

\begin{tabular}{llcc}
\toprule
\textbf{Method} &
\textbf{Tolerance} &
\makecell{\textbf{Observed or minimum}\\
\textbf{follow-up (months)}} &
\makecell{\textbf{Follow-up assessment}\\
\textbf{and time difference (months)}}\\
\midrule

PFST &
\makecell{Bootstrap: \(p=0.227\)\\Asymptotic: \(p=0.223\)} &
\makecell{\textit{Maximum observed}\\follow-up: 162.87} &
Sufficient\\

\midrule

PDC &
\(\delta=0.100\) &
36.93 &
Sufficient (\(-125.93\))\\

&
\(\delta=0.050\) &
44.67 &
Sufficient (\(-118.20\))\\

&
\(\delta=0.025\) &
51.05 &
Sufficient (\(-111.82\))\\

&
\(\delta=0.010\) &
58.24 &
Sufficient (\(-104.63\))\\

&
\(\delta=0.005\) &
63.03 &
Sufficient (\(-99.83\))\\

\midrule

RSC &
\(\varepsilon=0.100\) &
47.38 &
Sufficient (\(-115.49\))\\

&
\(\varepsilon=0.050\) &
53.38 &
Sufficient (\(-109.49\))\\

&
\(\varepsilon=0.025\) &
58.65 &
Sufficient (\(-104.21\))\\

&
\(\varepsilon=0.010\) &
64.85 &
Sufficient (\(-98.01\))\\

&
\(\varepsilon=0.005\) &
69.11 &
Sufficient (\(-93.76\))\\

\bottomrule
\end{tabular}
\end{table}

The PFST did not provide evidence of insufficient follow-up under either
the bootstrap or asymptotic procedure. Consistent with this result, the maximum observed
follow-up exceeded every minimum follow-up time estimated by the PDC and
RSC. Under the PDC, the maximum observed follow-up exceeded the estimated
minimum by \(99.83\)--\(125.93\) months. Under the RSC, it
exceeded the estimated minimum by \(93.76\)--\(115.49\)
months. Moreover, \(20\) patients had observed follow-up times at or
beyond the most stringent PDC estimate of \(63.03\) months, and \(15\)
patients had observed follow-up times at or beyond the most stringent RSC
estimate of \(69.11\) months. These late observations were censored.
Therefore, the available follow-up was sufficient under all tolerance
levels considered, although the maximum follow-up time should be
interpreted as the longest individual observation rather than as a common
follow-up duration for the entire cohort.

A direct comparison using identical numerical tolerance values
shows that the RSC produced longer minimum follow-up times than
the PDC. This difference is expected because equal numerical values of
\(\delta\) and \(\varepsilon\) represent different practical definitions
of follow-up sufficiency rather than equivalent levels of stringency.

In the application to the triple-negative breast cancer data,
\(\widehat{p}=0.6224\), so that \(1-\widehat{p}=0.3776\). The equivalent
tolerance pairs were
\((\delta,\varepsilon)=(0.100,0.2648)\),
\((0.050,0.1324)\),
\((0.025,0.0662)\),
\((0.010,0.0265)\), and
\((0.005,0.0132)\).
Substituting each pair into the expressions for
\(\widehat{t}_P(\delta)\) and \(\widehat{t}_R(\varepsilon)\) yielded the
same minimum follow-up times: \(36.93\), \(44.67\), \(51.05\), \(58.24\),
and \(63.03\) months, respectively. Consequently, the choice between the
two criteria should be guided by the practical interpretation of the
tolerance parameter rather than by numerical comparisons using identical
tolerance values.

In the SEER-based analysis by \citet{tai2005minimum}, breast cancer
required substantially longer follow-up, with a threshold time of
\(36.2\) years for patients younger than 60 years. This difference is
expected because the present application concerns a specific cohort of
patients with triple-negative breast cancer, whereas the analysis by
Tai et al.\ considered a broader breast cancer population.

Although this comparison establishes the methodological consistency of
the RSC with an existing criterion for minimum follow-up
assessment, it is also important to compare the estimated times with
clinical recurrence and mortality patterns reported specifically for
triple-negative breast cancer.
Table~\ref{tab:clinical_comparison_tnbc} summarizes this comparison.

\begin{table}[ht!]
\centering
\caption{Comparison of the minimum follow-up times estimated by the
proposed framework with clinical evidence for triple-negative breast
cancer.}
\label{tab:clinical_comparison_tnbc}
\small
\renewcommand{\arraystretch}{1.25}
\setlength{\tabcolsep}{5pt}

\begin{adjustbox}{max width=\textwidth}
\begin{tabular}{p{3cm}p{6.1cm}p{3cm}p{5.1cm}}
\toprule
\textbf{Study} &
\textbf{Main finding} &
\textbf{Clinical time horizon} &
\textbf{Consistency with the proposed framework}\\
\midrule

\citet{irvin2008triple}
&
The review reports that the risks of distant recurrence and death were
higher during the first five years after diagnosis, with distant
recurrence peaking at approximately three years.
&
Approximately 3--5 years
&
The three-year peak is close to
\(\widehat{t}_P(0.100)=36.93\) months, or approximately \(3.08\) years.
\\

\citet{chacon2010triple}
&
The peak recurrence rate occurred between the first and third years
after diagnosis, and most deaths occurred within the first five years.
&
Approximately 1--5 years
&
Consistent with
\(\widehat{t}_P(0.010)=58.24\) months and
\(\widehat{t}_P(0.005)=63.03\) months, corresponding to approximately
\(4.85\) and \(5.25\) years.
\\

\citet{zagami2022triple}
&
The risk of distant recurrence was substantially higher during the first
five years after diagnosis, whereas the risk of late recurrence after
five years was lower than \(3\%\).
&
Approximately 5 years
&
Consistent with
\(\widehat{t}_R(0.025)=58.65\) months, or approximately \(4.89\) years.
The more stringent RSC estimates ranged from approximately \(5.40\) to
\(5.76\) years.
\\

\bottomrule
\end{tabular}
\end{adjustbox}
\end{table}

Table~\ref{tab:clinical_comparison_tnbc} shows that the minimum follow-up times estimated by the proposed framework are broadly consistent with
the temporal pattern of recurrence and mortality reported for
triple-negative breast cancer. The clinical literature identifies the
first five years after diagnosis as the period of greatest risk, whereas
the PDC and RSC produce minimum follow-up times close to this
benchmark under moderate and stringent tolerance levels.

Taken together, the PDC, RSC, and clinical evidence suggest
that approximately five years of follow-up represents a clinically
relevant minimum horizon for assessing recurrence and mortality in
triple-negative breast cancer. More stringent control of the residual
survival probability among susceptible individuals may require
approximately \(5.4\) to \(5.8\) years of follow-up. The maximum observed follow-up was \(162.87\) months, or approximately \(13.57\) years, and there were \(15\) patients whose observed follow-up times were at or beyond the most stringent RSC estimate of \(69.11\) months. Thus, the upper tail of the observed follow-up distribution extended well beyond the estimated minimum requirements.

\section{Conclusion}

We developed a unified parametric framework for assessing follow-up sufficiency and estimating minimum follow-up requirements in mixture cure models. The framework combines the PFST, which provides an inferential assessment of the available follow-up, with the PDC and RSC, which express practical follow-up requirements on the population and susceptible survival scales, respectively.

The proposed criteria provide complementary perspectives on follow-up sufficiency. The PDC quantifies how closely the population survival curve approaches the cure plateau, making it particularly suitable when follow-up adequacy is considered from a population perspective. In contrast, the RSC controls the remaining survival probability among susceptible individuals and therefore directly assesses residual risk within the uncured subgroup. Although the two criteria are mathematically equivalent under an appropriate transformation of their tolerance parameters, they address different scientific objectives and should be viewed as complementary rather than competing approaches to follow-up assessment.

The simulation study showed that the PDC and RSC estimates become increasingly stable as the sample size and available observation period increase. Under joint estimation of the Weibull shape, scale, and cure-fraction parameters, the finite-sample precision of both criteria depends on the amount of information available about the susceptible survival distribution. The largest variability occurred under short follow-up, small sample sizes, and high cure fractions, whereas the estimates approached their theoretical targets in the more informative scenarios. The prostate cancer application illustrated a setting in which additional follow-up was required under moderate and stringent tolerances, whereas the triple-negative breast cancer application illustrated a setting in which the observed follow-up exceeded all the minimum requirements considered.

Unlike existing follow-up-sufficiency procedures that formulate the problem exclusively as a hypothesis test, the proposed framework integrates assessment and estimation. The PFST provides formal statistical evidence regarding follow-up adequacy, whereas the PDC and RSC translate this assessment into estimated minimum follow-up times that are directly interpretable for study planning and clinical research. Investigators can therefore select the criterion that best matches their scientific objective, depending on whether their interest lies in the convergence of the population survival curve to the cure plateau or in controlling the residual survival probability among susceptible individuals.

The tolerance parameters should not be interpreted as universal constants. \citet{boussari2018new} demonstrated that changing the conditional cure-probability threshold can materially alter the estimated time to cure, whereas \citet{dal2014long} obtained different time requirements depending on the future conditional relative-survival window and threshold considered. More generally, \citet{jakobsen2020estimating} showed that cure-point estimation depends jointly on the selected comparison measure, the margin of clinical relevance, and the estimation procedure. They also emphasized that confidence intervals may be wide when the comparison measure is nearly flat at the estimated threshold. The choice of \(\delta\) and \(\varepsilon\) should therefore be supported by subject-matter considerations, accompanied by sensitivity analyses, and interpreted together with the uncertainty in the estimated time requirements \citep{selukar2023receus}. Moreover, because follow-up times beyond the observed range are obtained through parametric extrapolation, misspecification of the cure fraction or susceptible survival distribution may affect the resulting estimates.

Although we illustrated the framework using the Weibull mixture cure model, it applies whenever the susceptible survival function is invertible. Future research may extend the PFST and the minimum follow-up calculations to flexible parametric and semiparametric cure models, regression settings, and covariate-dependent cure fractions. Recent approaches based on generalized additive models and neural networks offer one possible direction by relaxing restrictive assumptions about covariate effects \citep{dimari2025flexible}. A complementary extension would replace the single susceptible survival distribution with a finite mixture of latent fatal subgroups with distinct survival trajectories, as proposed by \citet{dimari2025modelbased}. In that setting, PDC and RSC could be defined either from the aggregate susceptible survival function or separately for clinically relevant latent subgroups.

\section*{Code availability}
The R scripts used to reproduce the simulation results, figures, and application analyses presented in this article are publicly available at \url{https://github.com/luizsilvaresende/pfst-cure-models}.

\bibliographystyle{apalike}
\bibliography{ref}

\end{document}